\newtheorem{thm}{Theorem}
\newtheorem{cor}{Corollary}
\newtheorem{prop}{Proposition}
\title{Perfect mixed codes from generalized Reed-Muller codes}
\date{}
\author{Alexander M. Romanov
\thanks {Sobolev Institute of Mathematics, 630090 Novosibirsk, Russia.}}
\begin{document}
\maketitle

\begin{abstract}
In this paper, we propose a new method for constructing $1$-perfect mixed codes in the Cartesian product
$\mathbb{F}_{n} \times  \mathbb{F}_{q}^n$,
where $\mathbb{F}_{n}$ and $\mathbb{F}_{q}$ are finite fields of orders $n = q^m$ and  $q$.
We consider  generalized Reed-Muller codes of length $n = q^m$ and order  $(q - 1)m - 2$.
Codes whose parameters are the same as the parameters of generalized Reed-Muller codes are called  \emph{ Reed-Muller-like codes}.
The construction we propose is based on partitions of  distance-2 MDS codes  into Reed-Muller-like codes of order $(q - 1)m - 2$.
We construct a set of  $q^{q^{cn}}$  nonequivalent 1-perfect mixed codes in the Cartesian  product $\mathbb{F}_{n} \times \mathbb{F}_{q}^{n}$,
where the constant $c$ satisfies $c < 1$, $n = q^m$ and $m$ is a  sufficiently large positive integer.
We also prove that each  $1$-perfect mixed code in the Cartesian product
$\mathbb{F}_{n} \times  \mathbb{F}_{q}^n$ corresponds
to a certain partition  of a distance-2 MDS code  into Reed-Muller-like codes of order $(q - 1)m - 2$.

\indent

\noindent {\bf Keywords} Mixed codes $\cdot$   Perfect codes  $\cdot$  Generalized Reed-Muller codes  $\cdot$
MDS codes $\cdot$ Latin hypercubes
\end{abstract}

\section{Introduction}\label{sec:intr}

%%%%%%%%%%%%%%%%%%%%%%%%%%%%%%%%%%%%%%%%%%%%%%%%%%%

Let $n$ be a positive integer and let $q_i$ be a power of a prime $p_i$, $i = 1, 2, \ldots,  n$.
Consider the Cartesian product
$V_n := \mathbb{F}_{1}\times \mathbb{F}_{2} \times \cdots \times \mathbb{F}_{n}$,
where $\mathbb{F}_{i}$ is a finite field of order $q_i$.
An arbitrary non-empty subset ${\mathcal C}$ of $V_n$ is called a \emph{code} of length $n$.
If  $\mathbb{F}_{1}, \mathbb{F}_{2}, \ldots, \mathbb{F}_{n}$ are fields of different orders,
then the code ${\mathcal C}$ is called a \emph{mixed code}.
If the orders of all finite fields  $\mathbb{F}_{1}, \mathbb{F}_{2}, \ldots, \mathbb{F}_{n}$ are equal to $q$,
then $V_n$ is an $n$-dimensional vector space over $\mathbb{F}_q$
and the code ${\mathcal C}$ is called a \emph{$q$-ary code} over  $\mathbb{F}_q$.

The elements of the set $V_n$ are called \emph{words}.
The words that belong to a code are called \emph{codewords}.
We assume that the zero word always belongs to the code unless otherwise specified.
If the codewords of a $q$-ary code form a subspace in the vector space $V_n$,
then the code is called \emph{linear}.

For any pair of words ${\bf x} = (x_1, x_2, \ldots, x_n)$ and ${\bf y} = (y_1, y_2, \ldots, y_n)$ in $V_n$, we define the \emph{Hamming distance} $d({\bf x},{\bf y})$.
Let us set $d({\bf x},{\bf y}) = |\{i \, | \, x_i \neq y_i\}|$.
The \emph{Hamming sphere } of radius $r$ centered at the word ${\bf x}$ is the set $B_r({\bf x}):= \{{\bf y} \in V_n \, | \, d({\bf x},{\bf y}) \leq r \}$.
The \emph{packing radius} $e({\mathcal C})$ of a code ${\mathcal C}$  of length $n$ is the maximum number $e \in \{0, 1, \ldots, n\}$
such  that $B_e({\bf u})\cap B_e({\bf v}) = \varnothing$ for all ${\bf u}, {\bf v} \in {\mathcal C}$, ${\bf u} \neq {\bf v}$.
\emph{The covering radius} $\rho({\mathcal C})$ of a code ${\mathcal C}$ of length $n$ is the minimum number $\rho \in \{0, 1, \ldots, n\}$
such that $\cup_{{\bf c} \in {\mathcal C}} {B_\rho}({\bf c}) = V_n$.

A code ${\mathcal C}$  is called \emph{perfect} if $\rho({\mathcal C})= e({\mathcal C})$
and a code ${\mathcal C}$ is called \emph{quasi-perfect} if $\rho({\mathcal C}) = e({\mathcal C}) + 1$.
If the packing  radius of a perfect  code is equal to $e$,
then the code is called \emph{$e$-perfect}.

The minimum distance between any two different codewords of a code ${\mathcal C}$ is called the \emph{minimum distance} of the code ${\mathcal C}$.
We will use the notation $(n, M, d)_q$ for a $q$-ary code of length $n$, size $M$, and minimum distance $d$.
For a linear $q$-ary code of length $n$, dimension $k$, and  minimum distance $d$, we will use the notation $[n, k, d]_q$.
In the case of binary codes, the index $q$ is omitted.

Perfect  codes attain   the sphere-packing  bound
and are optimal in the sense that they have the largest number of codewords for this length and distance \cite [Ch. 17]{mac}.

All linear  perfect $q$-ary codes over finite fields are known.
These are   Hamming codes that form an infinite family of linear $1$-perfect $q$-ary codes
with parameters $[n, n - m, 3]_q$, where $n = \frac{q^m - 1}{q - 1}$, $m \geq 2$,
and two sporadic Golay codes with parameters $[23, 12, 7]_2$ and $[11, 6, 5]_3$.

In addition to linear perfect $q$-ary codes, there are also non-linear $1$-perfect $q$-ary codes
and trivial perfect codes (a code containing only one codeword, or the whole space, or a binary repetition code of odd length)  \cite [Ch. 6]{mac}.

Non-linear $1$-perfect $q$-ary codes over finite fields form an infinite family of codes with parameters $(n, q^{n - m}, 3)_q$, which are the same as the parameters of Hamming codes.
Binary non-linear $1$-perfect codes were discovered by Vasil'ev \cite{vas}.
A generalization of Vasil'ev's construction to the $q$-ary case was proposed by Lindstr\"{o}m \cite{lin} and Sch\"{o}nheim \cite{sch1}.

Independently, Zinoviev and Leontiev \cite{zin}, as well as Tiet\"{a}v\"{a}inen \cite{tit} proved that over finite fields there are no other perfect $q$-ary codes than those listed above.

Let ${\mathcal C}$  and ${\mathcal D}$  be codes of length $n$ over a field  $\mathbb{F}_q$.
Then ${\mathcal C}$  and ${\mathcal D}$ are \emph{equivalent} if there exist $n$ permutations $\pi_1, \ldots, \pi_n$
of the $q$ elements and a permutation $\sigma$ of the $n$ coordinate positions such that
$(u_1, \ldots, u_n) \in  {\mathcal C}$ if and only if $\sigma(\pi_1(u_1), \ldots, \pi_n(u_n))\in  {\mathcal D}$.

Golay codes are unique up to equivalence \cite {del1}.
Hamming codes, i.e.  linear  $1$-perfect  $q$-ary  codes, are also unique in the sense that any linear
code with parameters $[n, n - m, 3]_q$ is equivalent to the $q$-ary  Hamming code of length $n$.
Denote by $N_q(n)$ the number of nonequivalent $1$-perfect $q$-ary codes of length $n$.
There are well-known lower and upper bounds for  $N_q(n)$:
$$ q^{q^{cn(1 - o(n))}}\leq N_q(n)  \leq  q^{q^{n(1 - o(n))}}, $$
where $c = \frac{1}{q}$.
The lower bound for  $N_q(n)$ was first proved by Vasil'ev \cite{vas} and Lindstr\"{o}m \cite{lin},
and also follows directly from Sch\"{o}nheim's construction \cite{sch1}.
The upper bound for  $N_q(n)$ is trivial.
In \cite{hed3}  it is  proved that  $c  \simeq \frac{2}{q}$ for  $q \geq 4$.
Note the significant gap between the lower  and  upper  bounds.

It is also known that  $N_2(7) = 1$ \cite{zar}, $N_2(15) = 5 983 $  \cite{ost}, $N_3(4) = 1$ \cite{tau}, $N_3(13) \geq 93 241 327$  \cite{shi},
$N_4(5) = 1$ \cite{ald}, $N_5(6) =  N_7(8) = 1$, $N_8(9) = 4$ \cite{kok}.
For $q \geq 9$, $q = p^t$, $t \geq 2$, it is  known that  $N_q(q + 1) \geq 2$ \cite{lin}.

For $q \geq 5$, there exist non-linear $1$-perfect $q$-ary codes that contain the zero word and are equivalent to a linear $1$-perfect $q$-ary code,
i.e. equivalent to the   Hamming code \cite{hed2,rom0}.
The notion of code equivalence, under which the Hamming code equivalence class contains only
the Hamming code and its translates, is considered, for example,  in  \cite{rom1,rom2,rom3}.

The classification  and enumeration  of non-linear $1$-perfect codes over finite fields  is a well-known unsolved problem in coding theory \cite [p. 181]{mac}.

Constructions of $1$-perfect mixed codes were proposed by Sch\"{o}nheim  \cite{sch2},  Herzog and Sch\"{o}nheim  \cite{her2}, and Heden \cite{hed}.
The construction of an infinite family of $2$-perfect mixed codes was proposed by Etzion and Greenberg \cite{etz}.
These are all currently known constructions of perfect mixed codes.

Let $q$ be a power of a prime and $m$ be a positive integer.
In this paper, we propose  a construction of $1$-perfect mixed codes in the Cartesian product $\mathbb{F}_{n} \times  \mathbb{F}_{q}^n$,
where $\mathbb{F}_{n}$ and $\mathbb{F}_{q}$ are finite fields of orders $n = q^m$ and $q$, respectively.
We consider generalized Reed-Muller codes of length $n = q^m$ and   order $(q - 1)m - 2$.
We also consider generalized Reed-Muller codes of length $n = q^m$ and order $(q - 1)m - 1$.
Codes whose parameters are the same as the parameters of generalized Reed-Muller codes are called  \emph{ Reed-Muller-like codes}.
The $q$-ary Reed-Muller-like codes of length $n = q^m$ and order $(q - 1)m - 1$ coincide with $q$-ary distance-2 MDS codes of length $n = q^m$.
The proposed construction is based on partitions
of $q$-ary distance-2 MDS codes of length $n = q^m$ into $q$-ary Reed-Muller-like codes of length $n = q^m$  and order $(q - 1)m - 2$.

In this paper,
we also prove that each  $1$-perfect mixed code in the Cartesian product
$\mathbb{F}_{n} \times  \mathbb{F}_{q}^n$ corresponds
to a certain partition  of a distance-2 MDS code  into Reed-Muller-like codes of order $(q - 1)m - 2$.

Finally, we  prove  that the number of nonequivalent 1-perfect mixed codes
in the product $\mathbb{F}_{n} \times \mathbb{F}_{q}^{n}$ is greater than $q^{q^{cn}}$,
where the constant $c$ satisfies $c < 1$, $n = q^m$ and $m$ is a  sufficiently large positive integer.

The paper is organized as follows.
Section \ref{sec:dif} gives the main definitions and a brief overview of known results.
Section \ref{sec:her} provides an overview of the Herzog and Sch\"{o}nheim  construction of $1$-perfect  mixed codes.
Section \ref{sec:heden}  gives an overview of the Heden  construction of $1$-perfect  mixed codes.
Section \ref{sec:new}  presents a construction of $1$-perfect mixed codes from Reed-Muller-like codes
and shows a connection between  $1$-perfect mixed codes in the product $\mathbb{F}_{n} \times \mathbb{F}_{q}^{n}$
and  partitions of $q$-ary  distance-2 MDS codes of length $n$
into $q$-ary Reed-Muller-like  codes of  length $n$ and order $r = (q - 1)m - 2$.
Section \ref{sec:low}  presents a lower bound for the number of nonequivalent $1$-perfect mixed codes.

\section{Definitions, notation and known results}\label{sec:dif}

%%%%%%%%%%%%%%%%%%%%%%%%%%%%%%%%%%%%%%%%%%%%%%%%%%%

In this section, we give some notations, basic definitions, and a brief review of known results.
It is well-known that additive group of $\mathbb{F}_{q}$ is an elementary abelian $p$-group,
where $q = p^t$ with $p$ prime and  $t\geq 1$.
The product $V_n := \mathbb{F}_{1}\times \mathbb{F}_{2} \times \cdots \times \mathbb{F}_{n}$ will be considered as an additive  group   with coordinate-wise addition.
A code is called \emph{additive} if its codewords form an additive subgroup of the group $V_n$.

Let ${\mathcal C}$  and ${\mathcal D}$  be mixed codes of length $n$ in the product $V_n := \mathbb{F}_{1}\times \mathbb{F}_{2} \times \cdots \times \mathbb{F}_{n}$,
where $\mathbb{F}_{i}$ is a finite field of order $q_i$.
Then ${\mathcal C}$  and ${\mathcal D}$ are \emph{equivalent} if  for each  $i = 1, 2,  \ldots,  n$ there exist a permutation $\pi_i$
of $\mathbb{F}_i$ and a permutation $\sigma$ of the $n$ coordinate positions
such that   $(u_1, \ldots, u_n) \in  {\mathcal C}$ if and only if $\sigma(\pi_1(u_1), \ldots, \pi_n(u_n))\in  {\mathcal D}$.

Let ${\bf x } = (x_1, x_2, \ldots, x_n)\in \mathbb{F}_{q}^n$,
$p({\bf x}) := \sum_{i = 1}^n x_i$.
The vector ${\bf x} = (x_1, x_2, \ldots, x_n)\in \mathbb{F}_{q}^n$ is called \emph{even} if $p({\bf x}) = 0$.
A code is called \emph{even} if it has only even codewords.

For a code $\mathcal{C}$ with parameters $(n,M,d)_q$, we define the {\it extended} code $\widehat{\mathcal{C}}$.
Set
$$
\widehat{\mathcal{C}} = \left\{{\bf x} = (x_1, x_2, \ldots,  x_n, x_{n + 1}) \in \mathbb {F}_{q}^{n +1}
\, \, \Big |  \, \,
(x_1, x_2, \ldots,  x_n) \in \mathcal{C}, \, \mbox{and} \,\, p({\bf x}) = 0\right\}.
$$
The extended code $\widehat{\mathcal{C}}$ has parameters $(n + 1,M,\widehat{d}\,)_q$, where $\widehat{d} = d$ or $d + 1$.

Obviously, all $q$-ary  extended $e$-perfect codes are quasi-perfect
and, in particular,  $q$-ary  extended $1$-perfect codes are quasi-perfect  codes.
All extended  binary  $1$-perfect codes are optimal.

The classical Reed-Muller codes are binary codes \cite [Ch. 13]{mac}.
A generalization of Reed-Muller codes to $q$-ary case were proposed by Kasami et al. in \cite{kas}.

Let $\mathbb{F}_{q}[X_1, X_2, \ldots, X_m]$ be the algebra of polynomials in $m$ variables over the field $\mathbb{F}_{q}$.
For a  polynomial $f \in \mathbb{F}_{q}[X_1, X_2, \ldots, X_m]$ we denote by $\text{deg}(f)$   its total degree.
Let $AG(m,q)$  be the $m$-dimensional affine space   over the field $\mathbb{F}_{q}$.
Let $n = q^m$ and the points $P_1, P_2, \ldots, P_n$ of $AG(m,q)$  be arranged in some fixed order.
Let  $r$ be an integer such that $0\leq r \leq (q-1)m$.
Then the {\it generalized Reed-Muller code} \cite{kas} of order $r$  over the field $\mathbb{F}_{q}$ is the following subspace:
\begin{equation*}
{RM}_q(r,m) = \Big\{ \big(f(P_1), f(P_2), \ldots, f(P_n)\big) \,\, \Big| \,\,
f \in \mathbb{F}_{q}[X_1, X_2, \ldots, X_m], \,\, \text{deg}(f) \leq r \Big\}.
\end{equation*}
The code ${RM}_q(r,m)$ has the following parameters \cite[Theorem 5.5]{key},  \cite{kas}:
\begin{enumerate}
\item  the length  is  $q^m$;
\item the dimension is
\begin{equation}\label{eq1}
\sum_{i = 0}^{r} \sum_{k = 0}^{m}{(-1)}^k {m \choose k }{i - kq + m - 1 \choose i - kq };
\end{equation}
\item the minimum distance is
\begin{equation}\label{eq2}
(q - b)q^{m - a - 1},
\end{equation}
where $r = (q - 1)a + b$ and $ 0 \leq b < q - 1$.
\end{enumerate}

For $r \leq (q - 1)m - 1$, the generalized  Reed-Muller code ${RM}_q(r,m)$ is an even code.

The generalized  Reed-Muller code ${RM}_q(r,m)$ of order $r = (q - 1)m - 2$  for $q \geq 3$ have parameters $[n = q^m, n - m - 1, 3]_q$  \cite{rom2}.
In the binary case, the code ${RM}_q(r,m)$ of order $ r = (q - 1)m - 2$ is  the extended Hamming code and have  parameters $[n = 2^m, n - m - 1, 4]$.
Generalized Reed-Muller codes ${RM}_q(r,m)$ of order $r = (q - 1)m - 2$  are  quasi-perfect  \cite{rom2}.
The order of the code dual to the generalized  Reed-Muller code ${RM}_q(r,m)$ of order $r = (q - 1)m - 2$   is 1 \cite {del}.

The generalized  Reed-Muller code ${RM}_q(r,m)$ of order $r = (q - 1)m - 1$  have parameters $[n = q^m, n  - 1, 2]_q$.
Generalized Reed-Muller codes ${RM}_q(r,m)$ of order $r = (q - 1)m - 1$
are the $q$-ary  linear  MDS codes with minimum distance 2
(MDS codes or maximum distance separable codes \cite [Ch. 11]{mac}).
The dual code  of the generalized  Reed-Muller code ${RM}_q(r,m)$ of order $r = (q - 1)m - 1$   is the repetition code.

The  $q$-ary Reed-Muller-like codes of  order $r = (q - 1)m - 1$
are the $q$-ary MDS codes with minimum distance 2.
The $q$-ary MDS codes with minimum distance 2 will be referred to as  $q$-ary distance-2 MDS codes.
There is a one-to-one correspondence between distance-2 MDS codes  of length $n$ over $\mathbb{F}_{q}$
and Latin hypercubes of order $q$ and dimension $n - 1$.
For $q = 2, 3$, Latin hypercubes of order $q$  are  unique \cite[p. 224]{lay}.
Therefore, for $q = 2, 3$, there exists a unique distance-2  MDS code, and this code is linear over $\mathbb{F}_{q}$.
For $q \geq 4$, there are non-linear distance-2  MDS codes  \cite{zin2}.

Two Latin hypercubes of order $q$ and   dimension $n$ are said to be {\it orthogonal} if, when they are superimposed,
fixing any $n - 2$ coordinates gives a Greco-Latin square.
A pair of orthogonal Latin hypercubes form a {\it Graeco-Latin hypercube}.

A non-empty set $Q$ of cardinality $q$ with an $n$-ary operation $f : Q^n \rightarrow Q$ uniquely invertible in each argument
is called an  $n$-ary {\it quasigroup} of order $q$.
The function $f$ will also be called an $n$-ary quasigroup of order $q$.
Latin hypercubes of order $q$ and dimension $n$ are $n$-dimensional multiplication tables of $n$-ary quasigroups of order $q$.

We will denote the concatenation of the words ${\bf u}$ and  ${\bf v}$ by $({\bf u}|{\bf v})$.

%%%%%%%%%%%%%%%%%%%%%%%%%%%%%%%%%%%%%%%%%%%%%%%%%%%
%%%%%%%%%%%%%%%%%%%%%%%%%%%%%%%%%%%%%%%%%%%%%%%%%%%

\section{Herzog and Sch{\"o}nheim codes }\label{sec:her}

%%%%%%%%%%%%%%%%%%%%%%%%%%%%%%%%%%%%%%%%%%%%%%%%%%%

In this section, we will consider  the construction of $1$-perfect mixed codes proposed by Herzog and Sch\"{o}nheim \cite{her}.
\begin{thm} \emph{(Herzog and Sch\"{o}nheim  \cite{her}.)}\label{her1}
Let $G$ be an arbitrary finite abelian group. Let $G_1, G_2, \ldots, G_n$ be a family of subgroups of the group $G$ such that
$G = G_1 \cup G_2 \cup \cdots \cup G_n$ and $G_i \cap G_j = \{0\}$ \, for $i \neq j$. Then the set
$$
{\mathcal C} = \left\{(g_1, g_2, \ldots, g_n)\in G_1  \times G_2  \times \cdots  \times G_n \,  \, \Big| \, \, g_1 + g_2 + \cdots + g_n  = 0 \right\}
$$
is an additive 1-perfect code in the product $G_1 \times G_2 \times \cdots \times G_n$.
\end{thm}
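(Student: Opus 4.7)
The plan is to verify the three ingredients that characterize a $1$-perfect additive code inside the mixed product $G_1 \times \cdots \times G_n$: additivity, minimum distance at least $3$, and the sphere-packing identity $|\mathcal{C}|\cdot |B_1(\mathbf{0})| = |G_1|\cdots|G_n|$. The central tool is the summation homomorphism $\varphi\colon G_1 \times \cdots \times G_n \to G$, $(g_1,\ldots,g_n) \mapsto g_1+\cdots+g_n$, which is well-defined because each $G_i$ is a subgroup of $G$. Then $\mathcal{C} = \ker\varphi$, so $\mathcal{C}$ is automatically an additive subgroup of the direct product.

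First I would use the union hypothesis to show $\varphi$ is surjective: given $g \in G$, choose $k$ with $g \in G_k$ and map the tuple with $g$ in position $k$ and zeros elsewhere to $g$. Hence $|\mathcal{C}| = \prod_{i=1}^n |G_i|/|G|$. Next I would use the pairwise-trivial intersections to see that the sets $G_i \setminus \{0\}$ partition $G \setminus \{0\}$, giving $|G| = 1 + \sum_{i=1}^n (|G_i|-1)$. Since the Hamming ball of radius $1$ around any tuple in the mixed product has exactly $1 + \sum_{i=1}^n (|G_i|-1)$ elements, this yields $|B_1(\mathbf{0})| = |G|$, and the identity $|\mathcal{C}|\cdot|B_1(\mathbf{0})| = \prod_{i=1}^n |G_i|$ is the sphere-packing equation for a $1$-perfect code.

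Finally I would check the minimum distance is at least $3$ by showing $\mathcal{C}$ contains no nonzero element of weight $\leq 2$: a weight-$1$ tuple has exactly one nonzero summand and cannot satisfy $\varphi = 0$, while a weight-$2$ tuple supported on positions $i<j$ forces $g_i = -g_j \in G_i \cap G_j = \{0\}$, contradicting the support. Combined with the cardinality identity, this forces the radius-$1$ balls around codewords to be pairwise disjoint and exhaust the ambient product; equivalently, one can exhibit the covering directly by observing that any tuple $\mathbf{x}$ has $\varphi(\mathbf{x}) = s \in G_k$ for some $k$, and subtracting $s$ from the $k$-th coordinate produces a codeword at distance $\leq 1$. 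No step is a genuine obstacle here: the two hypotheses on the $G_i$ translate, respectively, into surjectivity of $\varphi$ (which controls $|\mathcal{C}|$ and $|B_1(\mathbf{0})|$) and triviality of pairwise intersections (which controls the minimum distance), so the sphere-packing bound is automatically met with equality.
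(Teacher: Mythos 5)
Your proposal is a correct and complete proof. Note that the paper itself does not prove Theorem~\ref{her1}; it only states it and cites Herzog and Sch\"onheim, so there is no in-paper argument to compare against. Your argument supplies exactly the right ingredients: the summation map $\varphi$ is a homomorphism with $\mathcal{C}=\ker\varphi$ (additivity); the covering hypothesis $G=\bigcup_i G_i$ gives surjectivity of $\varphi$, hence $|\mathcal{C}|=\prod_i|G_i|/|G|$, and together with the trivial pairwise intersections gives $|B_1|=1+\sum_i(|G_i|-1)=|G|$, so the sphere-packing identity holds with equality; and the intersection hypothesis rules out codewords of weight $1$ or $2$ (the weight-$2$ case via $g_i=-g_j\in G_i\cap G_j=\{0\}$), so $e(\mathcal{C})\geq 1$. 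Your direct covering argument (locate $s=\varphi(\mathbf{x})$ in some $G_k$ and subtract it from the $k$-th coordinate, which stays inside the alphabet $G_k$) independently shows $\rho(\mathcal{C})\leq 1$, which is a nice touch since it makes the proof not rely on the counting identity at all; either route closes the argument. The only stylistic remark is that both the counting identity and the direct covering are given, and one of the two would suffice once the minimum distance is established.
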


In \cite{her}, Herzog and Sch\"{o}nheim  proved, that if a finite abelian  group $G$ has a partition into subgroups,
then $G$ must be isomorphic to   $G^r(p)$ for some $r$, where  $G(p)$ is the additive group of $\mathbb{F}_{p}$.

We give an example of constructing 1-perfect  codes using  Theorem  \ref{her1}.
Consider binary codes ${\mathcal C}_1 = \{000, 100, 010, 110\}$, ${\mathcal C}_2 = \{000, 101\}$,
${\mathcal C}_3 = \{000, 011\}$, ${\mathcal C}_4 = \{000, 111\}$, ${\mathcal C}_5 = \{000, 001\}$.
These codes form a partition of the space $\mathbb{F}_2^3$.
Therefore, by Theorem \ref{her1}, there exists an additive 1-perfect mixed code ${\mathcal C} \subset \mathbb{F}_4 \times \mathbb{F}_2^4$.
If $\mathbb{F}_4 = \{0, 1, \alpha, \beta\}$ then
${\mathcal C}$ contains the following words:
\begin{center}
$(0, 0, 0, 0, 0), (\alpha, 1, 0, 1, 0), (\beta, 1, 1, 0, 0), (1, 1, 0, 0, 1)$,

$(0, 1, 1, 1, 1), (\alpha, 0, 1, 0, 1), (\beta, 0, 0, 1, 1), (1, 0, 1, 1, 0)$.
\end{center}
It is easy to see that the code ${\mathcal C}$  can be constructed in another way,
namely, by partitioning the set of even words in $\mathbb{F}_2^4$ into extended 1-perfect binary codes of length $4$.

For $m \geq 2$, using partitions of the set of even words of the vector space $\mathbb{F}_2^{2^m}$
into extended 1-perfect binary codes of length ${2^m}$,
we can construct a wide class of 1-perfect mixed codes in $ \mathbb{F}_{2^m} \times \mathbb{F}_{2}^{2^m}$.

Herzog and Sch\"{o}nheim proved the existence of a wide class of additive 1-perfect mixed codes \cite{her2}.
Let $\alpha$ and $m$ be integers such that
$m > \alpha \geq 2$, and let $n = 1 + (q^m - q^\alpha)/(q - 1)$.
Then there is an additive 1-perfect mixed code in
$\mathbb{F}_{q^\alpha} \times \mathbb{F}_{q}^{n - 1}$, see also \cite{wee}.

For $\alpha = m - 1 \geq 2$ the Herzog and Sch\"{o}nheim  code is an additive 1-perfect mixed code in
$\mathbb{F}_{q^{m - 1}} \times \mathbb{F}_{q }^{q^{m - 1}}$.

An additive 1-perfect mixed code can be transformed into a non-additive 1-perfect mixed code
using a bijection $\psi : \mathbb{F}_{q} \rightarrow \mathbb{F}_{q}$,
which is not a group homomorphism \cite{pas}.

%%%%%%%%%%%%%%%%%%%%%%%%%%%%%%%%%%%%%%%%%%%%%%%%%%%
%%%%%%%%%%%%%%%%%%%%%%%%%%%%%%%%%%%%%%%%%%%%%%%%%%%

\section{Heden codes }\label{sec:heden}

%%%%%%%%%%%%%%%%%%%%%%%%%%%%%%%%%%%%%%%%%%%%%%%%%%%

The construction of $1$-perfect codes proposed by Heden in \cite{hed}
allows one to construct new $1$-perfect mixed codes
from already known $1$-perfect mixed codes,
and also allows one to construct $q$-ary $1$-perfect codes from $1 $-perfect mixed codes
and vice versa.

Since the number of words contained in a Hamming sphere in $V_n$ does not depend on the center of the sphere,
we can consider the Hamming sphere without specifying its center.

\begin{thm} \emph{(Heden  \cite{hed}.)}\label{hed1}
Let  $n$ and $m$ be a positive integers, $n > m$.
Let
$$
{\mathcal C}' \subset V_m = \mathbb{F}_{1}\times \mathbb{F}_{2} \times  \cdots \times \mathbb{F}_{m},
$$
$$
{\mathcal C}''\subset V_{n - m } =  \mathbb{F}_{m + 1}\times \mathbb{F}_{ m + 2} \times  \cdots \times \mathbb{F}_{n}.
$$
Let ${\mathcal C}'$, ${\mathcal C}''$ be 1-perfect codes, $\mathbb{F}_{i}$ be a finite field of order $q_i$.
Let ${B''_1}$ be a Hamming sphere of radius $1$ in $V_{n - m }$.
Let $|{B''_1}| = q_{m}$.
Then in the product $$ V_{n - 1} = \mathbb{F}_{1}\times \mathbb{F}_{2} \times \cdots \times \mathbb{F}_{m - 1} \times \mathbb{F}_{m + 1} \times \cdots \times \mathbb{F}_{n}$$
there is a 1-perfect code ${\mathcal C}$ of length $n - 1$.
\end{thm}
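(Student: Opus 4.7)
The plan is to build $\mathcal{C}\subset V_{n-1}$ by gluing codewords of $\mathcal{C}'$ and $\mathcal{C}''$ through a bijection that absorbs the $m$-th coordinate of $V_m$ into the block $V_{n-m}$. Place the sphere $B''_1$ at the origin of $V_{n-m}$ and fix any bijection $\phi:\mathbb{F}_m\to B''_1$ with $\phi(0)=\mathbf{0}$; this is possible exactly because $|B''_1|=q_m$. Then define
\[
\mathcal{C}=\bigl\{(u_1,\ldots,u_{m-1},\mathbf{w}+\phi(u_m))\ :\ (u_1,\ldots,u_m)\in\mathcal{C}',\ \mathbf{w}\in\mathcal{C}''\bigr\}\subset V_{n-1}.
\]
Three things must then be checked: the parametrization is injective, the sphere-packing volume equation holds in $V_{n-1}$, and the minimum distance of $\mathcal{C}$ is at least $3$.

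For injectivity I would argue that if two choices produce the same word, then the first $m-1$ coordinates give $u_i=u'_i$ for $i<m$, and the identity in the last block rearranges to $\mathbf{w}-\mathbf{w}'=\phi(u'_m)-\phi(u_m)$. The right-hand side has Hamming weight at most $2$, while the left, being a difference of codewords in the $1$-perfect code $\mathcal{C}''$, has weight $0$ or at least $3$; hence both sides must vanish, and injectivity of $\phi$ forces $u_m=u'_m$. The volume check is then an arithmetic exercise using $|V_m|=|\mathcal{C}'|\cdot|B'_1|$, $|V_{n-m}|=|\mathcal{C}''|\cdot q_m$ (where $B'_1$ is a radius-$1$ sphere in $V_m$), and $|V_{n-1}|=|V_m|\,|V_{n-m}|/q_m$; it reduces to verifying that the size of a radius-$1$ sphere in $V_{n-1}$ equals $|B'_1|$, which is equivalent to the hypothesis $|B''_1|=q_m$.

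The main work is to verify $d(\mathcal{C})\geq 3$. I would split the comparison of two distinct codewords of $\mathcal{C}$ according to the underlying data $((u_1,\ldots,u_m),\mathbf{w})$ versus $((u'_1,\ldots,u'_m),\mathbf{w}')$: (i) the $\mathcal{C}'$-parts coincide but $\mathbf{w}\neq\mathbf{w}'$, so the last block alone gives distance $\geq d(\mathcal{C}'')\geq 3$; (ii) the $\mathcal{C}'$-parts differ but agree in position $m$, so the first $m-1$ coordinates realize distance $\geq d(\mathcal{C}')\geq 3$; (iii) the $\mathcal{C}'$-parts differ with $u_m\neq u'_m$, in which case the first $m-1$ coordinates contribute at least $d(\mathcal{C}')-1\geq 2$ and the last block contributes the weight of $(\mathbf{w}-\mathbf{w}')+(\phi(u_m)-\phi(u'_m))$. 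Case (iii) is the main obstacle: when $\mathbf{w}=\mathbf{w}'$ the last-block weight is $\mathrm{wt}(\phi(u_m)-\phi(u'_m))\in\{1,2\}$, and when $\mathbf{w}\neq\mathbf{w}'$ the triangle inequality yields last-block weight $\geq 3-2=1$, since $\mathrm{wt}(\mathbf{w}-\mathbf{w}')\geq 3$ and $\mathrm{wt}(\phi(u_m)-\phi(u'_m))\leq 2$; in every subcase the total distance is at least $3$.

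Combining $d(\mathcal{C})\geq 3$ (so radius-$1$ balls around codewords are pairwise disjoint) with the volume equality from the second step forces $\mathcal{C}$ to be $1$-perfect, as required.
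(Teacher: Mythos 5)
Your proof is correct and follows essentially the same route as the paper: the translates $\mathcal{C}''+\phi(\omega)$, $\omega\in\mathbb{F}_m$, are exactly an explicit instance of the partition of $V_{n-m}$ into $1$-perfect codes that the paper's proof invokes, and both arguments conclude with the same sphere-packing volume count using $|B''_1|=q_m$. Your case analysis for $d(\mathcal{C})\geq 3$ is a welcome elaboration of the step the paper only asserts as $e(\mathcal{C})\geq 1$.
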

\begin{proof}
Consider a partition  ${\mathcal C}_1'', {\mathcal C}_2'', \ldots, {\mathcal C}_{q_m}'' $ of the product $V_{n - m }$ into $1$-perfect codes.
Let $\mathbb{F}_{m} = \{\omega_1, \omega_2, \ldots, \omega_{q_{m}}\}$.
In each codeword from ${\mathcal C}'$, the element  $\omega_i \in \mathbb{F}_{m}$, $i \in \{1, 2, \ldots, q_m\}$,   is replaced by codewords from $ {\mathcal C}''_i$,
i.e. from one codeword of  ${\mathcal C}'$,  containing  $\omega_i$,
we will get   $|{\mathcal C}''_i|$  new words.
The code constructed in this way will be the code ${\mathcal C}$ with packing  $e({\mathcal C}) \geq 1$.
Let us show that the coverage radius $\rho({\mathcal C}) = 1$.
Let ${B'_1}$ and ${B_1}$ be Hamming spheres of radius $1$ in $V_{m}$ and $V_{n - 1}$, respectively.
We have $|{B_1}| = |{B'_1}|$ because $|{B''_1}| = q_{m}$.
Since ${\mathcal C}'$ and ${\mathcal C}''$ are 1-perfect codes,
$$ |{\mathcal C}|\cdot|{B_1}| = |{\mathcal C}'|\cdot|{\mathcal C}''|\cdot|B'_1| = |V_{ m}|\cdot|V_{n - m}|/|{B''_1}| = |V_{n - 1}|. $$
Hence $\rho({\mathcal C}) = e({\mathcal C})$ and ${\mathcal C}$ is a 1-perfect code in $V_{n - 1}$.
\end{proof}

Let us give examples of constructing 1-perfect  codes using  Theorem  \ref{hed1}.
It is known that there are 1-perfect codes in $\mathbb{F}_4^{5}$ and in $\mathbb{F}_2^3$.
A Hamming sphere of radius $1$ in  $\mathbb{F}_2^3$ contains $4$ words.
Successive application of   Theorem \ref{hed1} allows us to construct $1$-perfect mixed codes in
$\mathbb{F}_4^4  \times \mathbb{F}_2^3$,
$\mathbb{F}_4^3  \times \mathbb{F}_2^6$,
$\mathbb{F}_4^2  \times \mathbb{F}_2^{9}$,
$\mathbb{F}_4  \times \mathbb{F}_2^{12}$,
and a 1-perfect binary code in $\mathbb{F}_2^{15}$.
It was  computed in \cite{ost1} that the number of nonequivalent 1-perfect mixed codes in
$\mathbb{F}_4^4  \times \mathbb{F}_2^3$,
$\mathbb{F}_4^3  \times \mathbb{F}_2^6$,
$\mathbb{F}_4^2  \times \mathbb{F}_2^{9}$,
$\mathbb{F}_4  \times \mathbb{F}_2^{12}$
is 1, 4, 39,  and  6483, respectively.

It is known that there are 1-perfect codes in $\mathbb{F}_9^{10}$ and in $\mathbb{F}_3^4$.
A Hamming sphere of radius $1$ in  $\mathbb{F}_3^4$ contains $9$ words.
Successive application of   Theorem \ref{hed1} allows us to construct $1$-perfect mixed codes in
$\mathbb{F}_9^9  \times \mathbb{F}_3^4$,
$\mathbb{F}_9^8  \times \mathbb{F}_3^8$,
$\mathbb{F}_9^7  \times \mathbb{F}_3^{12}$,
$\mathbb{F}_9^6  \times \mathbb{F}_3^{16}$,
$\mathbb{F}_9^5  \times \mathbb{F}_3^{20}$,
$\mathbb{F}_9^4  \times \mathbb{F}_3^{24}$,
$\mathbb{F}_9^3  \times \mathbb{F}_3^{28}$,
$\mathbb{F}_9^2  \times \mathbb{F}_3^{32}$,
$\mathbb{F}_9    \times \mathbb{F}_3^{36}$,
and a 1-perfect ternary code in $\mathbb{F}_3^{40}$.

A Hamming sphere of radius $1$ in a vector space of dimension
$\frac{n - 1}{q - 1}$ over  $\mathbb{F}_q$ contains $n$ words, $n = q^m$.
Applying  Theorem \ref{hed1} to additive 1-perfect mixed Herzog and Sch\"{o}nheim  codes
in $\mathbb{F}_n \times \mathbb{F}_{q}^{n}$
allows us to construct $q$-ary 1-perfect  codes in a vector space of dimension
$n + \frac{n - 1}{q - 1}$.

As noted above, we can construct 1-perfect mixed codes in $\mathbb{F}_{2^m} \times \mathbb{F}_{2}^{2^m}$
using partitions of the set of even words of the vector space $\mathbb{F}_2^{2^m}$
into binary extended 1-perfect  codes of length ${2^m}$,
so in this case  Theorem \ref{hed1}  implies the following theorem.

\begin{thm} \label{hed2}
Let ${\mathcal C}_1', {\mathcal C}_2', \ldots, {\mathcal C}_{2^m}'$
be a partition of the set of even words in the vector space $ \mathbb{F}_2^{ 2^m}$
into binary  extended 1-perfect codes of length ${2^m}$.
Let ${\mathcal C}_1'', {\mathcal C}_2'', \ldots, {\mathcal C}_{2^m}'' $ be a partition of $ \mathbb{F}_2^{2^ m - 1}$
into binary 1-perfect  codes of length ${2^m - 1}$.
Let $\pi$ be a permutation acting on the index set $\{1, 2, \ldots, 2^m\}$.
Then the set
$$
{\mathcal C} = \left\{ \big({\bf u}|{\bf v}\big) \,\, \Big|  \,\, {\bf u} \in {\mathcal C}'_i, \, {\bf v} \in {\mathcal C}''_{\pi(i)}, \,\, i = 1, 2, \ldots, 2^m\right\}
$$
is a  binary 1-perfect code of length $ {2^{m + 1} - 1}$.
\end{thm}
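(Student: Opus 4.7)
The plan is to deduce Theorem \ref{hed2} from Theorem \ref{hed1} by first constructing an auxiliary 1-perfect mixed code in $\mathbb{F}_{2^m} \times \mathbb{F}_2^{2^m}$ and then applying Heden's theorem to it, with the permutation $\pi$ absorbed into the choice of labeling of the second partition.

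First I would fix an enumeration $\omega_1, \ldots, \omega_{2^m}$ of $\mathbb{F}_{2^m}$ and introduce
$$
\mathcal{C}^{*} \;=\; \bigcup_{i=1}^{2^m} \{\omega_i\} \times \mathcal{C}'_i \;\subset\; \mathbb{F}_{2^m} \times \mathbb{F}_2^{2^m}.
$$
The claim is that $\mathcal{C}^{*}$ is a 1-perfect mixed code, and verifying this has two parts. For the cardinality, the partition hypothesis gives $|\mathcal{C}^{*}| = 2^{2^m - 1}$, and multiplying by the sphere size $1 + (2^m - 1) + 2^m = 2^{m+1}$ yields exactly $|\mathbb{F}_{2^m} \times \mathbb{F}_2^{2^m}|$. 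For the minimum distance, take two distinct codewords $(\omega_i, {\bf u}_1)$ and $(\omega_j, {\bf u}_2)$: if $i = j$ then ${\bf u}_1 \neq {\bf u}_2$ lie in a common extended 1-perfect code, so their distance is at least $4$; if $i \neq j$ the first parts already contribute $1$, and the second parts lie in distinct classes of the even-word partition, hence are different even vectors and so differ in at least two coordinates. Both subcases give $d \geq 3$, so $\mathcal{C}^{*}$ meets the sphere-packing bound with $d \geq 3$ and is therefore 1-perfect.

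Second I would apply Theorem \ref{hed1} with $\mathcal{C}' = \mathcal{C}^{*}$ (regarded as a mixed code whose last coordinate is the $\mathbb{F}_{2^m}$-valued one) and with the partition $\{\mathcal{C}''_{\pi(i)}\}_{i=1}^{2^m}$ of $\mathbb{F}_2^{2^m - 1}$ into 1-perfect binary codes, assigning the class $\mathcal{C}''_{\pi(i)}$ to the field element $\omega_i$. The sphere-size hypothesis $|B_1''| = q_m$ of Theorem \ref{hed1} reduces to $2^m = 2^m$ and is automatic. Running the construction from the proof of Theorem \ref{hed1}, the element $\omega_i$ appearing in a codeword of $\mathcal{C}^{*}$ is replaced by every codeword of $\mathcal{C}''_{\pi(i)}$. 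Since the codewords of $\mathcal{C}^{*}$ with first coordinate $\omega_i$ are exactly $\{\omega_i\} \times \mathcal{C}'_i$, the substitution produces $\mathcal{C}'_i \times \mathcal{C}''_{\pi(i)}$, and taking the union over $i$ recovers the set $\mathcal{C}$ in the statement. Theorem \ref{hed1} then certifies that $\mathcal{C}$ is a 1-perfect binary code of length $2^m + (2^m - 1) = 2^{m+1} - 1$.

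The only substantive step is the minimum-distance verification for $\mathcal{C}^{*}$, and within it the delicate subcase is $i \neq j$: one has to invoke the partition property to conclude ${\bf u}_1 \neq {\bf u}_2$, at which point the evenness of both vectors upgrades their Hamming distance from $\geq 1$ to $\geq 2$, and together with the first-coordinate contribution this yields $d \geq 3$. Once that observation is in hand, the rest of the argument is bookkeeping that fits cleanly into the framework of Theorem \ref{hed1}.
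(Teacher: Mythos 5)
Your proposal is correct and follows exactly the route the paper intends: it builds the auxiliary $1$-perfect mixed code in $\mathbb{F}_{2^m}\times\mathbb{F}_2^{2^m}$ from the partition into extended $1$-perfect codes and then invokes Theorem~\ref{hed1}, which is precisely the derivation the paper sketches when it says ``Theorem~\ref{hed1} implies the following theorem.'' Your write-up merely supplies the details the paper leaves implicit (the sphere-packing count and the $d\geq 3$ check using evenness in the case $i\neq j$), and those details are accurate.
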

The construction  presented in Theorem \ref{hed2}  was also known earlier, see for example  \cite{liu}.
This construction  is called the
\emph{"concatenation"} or the \emph{"doubling construction"}
and can be viewed as a combinatorial generalization of the well-known
$(\bf u | \bf u + \bf v)$  construction.
The constructions  presented in Theorem \ref{hed1} and in Theorem \ref{hed2}
can be viewed  as examples of the generalized concatenated code constructions  \cite{zin1}.

Heden proved for the first time the existence of non-linear 1-perfect binary codes that are not equivalent to Vasil'ev codes \cite{hed}.
Solov'eva studied non-trivial partitions of the vector space $ \mathbb{F}^n_2$ into Vasil'ev  codes \cite{sol}.
Phelps showed that there are 11 nonequivalent partitions of $ \mathbb{F}_2^{7}$ into Hamming codes of length $7$
and 10 nonequivalent partitions of $\mathbb{F}_2^{8}$ into extended Hamming codes of length $8$ \cite{phe4}.

%%%%%%%%%%%%%%%%%%%%%%%%%%%%%%%%%%%%%%%%%%%%%%%%%%%
%%%%%%%%%%%%%%%%%%%%%%%%%%%%%%%%%%%%%%%%%%%%%%%%%%%

\section{Construction of 1-perfect mixed codes from  Reed-Muller-like codes }\label{sec:new}

%%%%%%%%%%%%%%%%%%%%%%%%%%%%%%%%%%%%%%%%%%%%%%%%%%%

In this section, we propose a construction of $1$-perfect mixed codes in $\mathbb{F}_n \times \mathbb{F}_{q}^{n}$ that is different from the construction of Herzog and Sch\"{o}nheim.

\begin{thm}
\label{constr1}
Let $\mathbb{F}_n = \{\omega_1, \omega_2, \ldots, \omega_n\}$ be a finite field of order $n = q^m$, $m \geq 1$ and
let ${\mathcal C }_1', {\mathcal C}_2', \ldots, {\mathcal C}_{n}' $ be a partition of a $q$-ary  distance-2 MDS code of length $q^m$
into  $q$-ary  Reed-Muller-like codes of length $q^m$ and  order $(q - 1)m - 2$.
Then the set
$$
{\mathcal C} = \left\{ \big({\omega_{i}}|{\bf v}\big) \,\, \Big | \,\, {\omega_{i}}  \in \mathbb{F}_n, \,  {\bf v}  \in  {\mathcal C}'_i, \,\, i = 1, 2, \ldots, n\right\}
$$
for $m \geq 2$ is a 1-perfect mixed code in the product $ \mathbb{F}_{n} \times \mathbb{F}_{q}^{n}$,
and for $m = 1$, $q \geq 3$ is a $q$-ary 1-perfect  code of length $q + 1$.
The converse is also true.
\end{thm}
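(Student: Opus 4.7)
The plan is to reduce both implications to sphere-packing bookkeeping plus a short case analysis on the $\mathbb{F}_n$-coordinate. Write $n = q^m$ and $r = (q-1)m - 2$. The paper already records that a Reed-Muller-like code of order $r$ has parameters $(q^m, q^{n-m-1}, 3)_q$ and that a distance-$2$ MDS code of length $q^m$ has size $q^{n-1}$, so any partition of such an MDS code into RM-like codes of order $r$ has exactly $q^{n-1}/q^{n-m-1} = q^m = n$ equal-size parts. The Hamming ball of radius $1$ in $\mathbb{F}_n \times \mathbb{F}_q^n$ has cardinality $1 + (n-1) + n(q-1) = nq$, and the candidate code has $|\mathcal{C}| = n \cdot q^{n-m-1} = q^{n-1}$, so the sphere-packing identity $|\mathcal{C}|\cdot nq = nq^n = |\mathbb{F}_n \times \mathbb{F}_q^n|$ holds automatically. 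Thus in the forward direction it suffices to verify minimum distance $\geq 3$.

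For the forward direction, I case-split on whether the first coordinates of two distinct codewords $(\omega_i|\mathbf{v})$ and $(\omega_j|\mathbf{w})$ agree. If $i = j$, then $\mathbf{v}, \mathbf{w}$ lie in a common RM-like code of minimum distance $3$, so the Hamming distance between them is $\geq 3$. If $i \neq j$, then $\mathbf{v}, \mathbf{w}$ lie in different parts of the partition, hence are distinct elements of the underlying distance-$2$ MDS code, so $d(\mathbf{v},\mathbf{w}) \geq 2$ and the first coordinate contributes a further $1$. Either way $d \geq 3$; together with the sphere-packing count this gives $1$-perfectness. The case $m = 1$, $q \geq 3$ is identical after identifying $\mathbb{F}_n$ with $\mathbb{F}_q$ and the ambient product with $\mathbb{F}_q^{q+1}$, yielding a $q$-ary $1$-perfect code of length $q+1$.

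For the converse, let $\mathcal{C} \subset \mathbb{F}_n \times \mathbb{F}_q^n$ be any $1$-perfect mixed code and define the slices $\mathcal{C}_i' := \{\mathbf{v} \in \mathbb{F}_q^n : (\omega_i|\mathbf{v}) \in \mathcal{C}\}$. Running the above case analysis backwards on the hypothesis $d(\mathcal{C}) \geq 3$: the same-first-coordinate case forces each $\mathcal{C}_i'$ to have minimum distance $\geq 3$, while the different-first-coordinate case forces $d(\mathbf{v},\mathbf{w}) \geq 2$ whenever $\mathbf{v} \in \mathcal{C}_i'$, $\mathbf{w} \in \mathcal{C}_j'$, $i \neq j$; in particular $\mathbf{v} \neq \mathbf{w}$. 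So the slices are pairwise disjoint and $M := \bigcup_i \mathcal{C}_i'$ (a disjoint union) has minimum distance $\geq 2$. Sphere-packing forces $|\mathcal{C}| = q^{n-1}$, hence $|M| = q^{n-1}$, so $M$ attains the Singleton bound and is a distance-$2$ MDS code.

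The only part that requires real work, and the one I flag as the main obstacle, is showing each slice has cardinality exactly $q^{n-m-1}$, hence is a Reed-Muller-like code of order $r$. Fixing $\omega_i$, I would double-count the $q^n$ words of $\{\omega_i\} \times \mathbb{F}_q^n$ by which codeword of $\mathcal{C}$ covers each in the radius-$1$ sense. The three possible covers are (i) self-covers when $\mathbf{x} \in \mathcal{C}_i'$, (ii) covers by $(\omega_j|\mathbf{x})$ with $j \neq i$ and $\mathbf{x} \in \mathcal{C}_j'$, and (iii) covers by $(\omega_i|\mathbf{y})$ with $\mathbf{y} \in \mathcal{C}_i'$ and $d(\mathbf{x},\mathbf{y}) = 1$. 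Using pairwise disjointness of the slices and the minimum distance of $\mathcal{C}_i'$ (so its radius-$1$ balls in $\mathbb{F}_q^n$ are pairwise disjoint), these three categories partition $\mathbb{F}_q^n$, yielding the linear identity $|\mathcal{C}_i'| + (|M| - |\mathcal{C}_i'|) + |\mathcal{C}_i'|\,n(q-1) = q^n$, which solves to $|\mathcal{C}_i'| = q^{n-1}/n = q^{n-m-1}$. Once this covering identity is written down, the converse is complete; no further structural argument is required.
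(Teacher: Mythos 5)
Your proof is correct, and the forward direction is essentially the paper's own argument: the same case split on the $\mathbb{F}_n$-coordinate to get minimum distance at least $3$ (distance $\geq 3$ inside one part of the partition; $1+2$ across parts since distinct parts give distinct words of the distance-$2$ MDS code), followed by the identical sphere-packing identity $|\mathcal{C}|\cdot|B_1| = q^{n-1}\cdot q^{m+1} = |\mathbb{F}_n\times\mathbb{F}_q^n|$; the paper merely asserts the distance claim without writing out the two cases. Where you genuinely diverge is the converse. The paper dispatches it in one sentence by invoking the embedding of mixed $1$-perfect codes into ordinary $1$-perfect codes (Theorem \ref{hed1}) together with the claim that Reed-Muller-like codes of order $(q-1)m-2$ attain Delsarte's linear programming bound, with no slice-by-slice verification. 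You instead argue directly: slicing $\mathcal{C}$ along the $\mathbb{F}_n$-coordinate, reading off minimum distance $\geq 3$ within a slice and $\geq 2$ across slices from $d(\mathcal{C})\geq 3$, identifying the union $M$ of the slices as a distance-$2$ MDS code via the Singleton bound, and pinning down $|\mathcal{C}_i'|=q^{n-m-1}$ by double-counting how the $q^n$ words with first coordinate $\omega_i$ are covered. That covering identity is sound --- the three cover types are exhaustive and mutually exclusive precisely because $\mathcal{C}$ is $1$-perfect with $d(\mathcal{C})\geq 3$ --- and it makes the converse self-contained and checkable, which the paper's version is not. One small loose end: to conclude each slice is literally Reed-Muller-like you need its minimum distance to equal $3$ (resp.\ $4$ for $q=2$), not merely be at least $3$; this is automatic, since a $q$-ary code of length $q^m$ and size $q^{n-m-1}$ with minimum distance $\geq 4$ would violate the Hamming bound after shortening when $q\geq 3$, while for $q=2$ the slices lie in the even-weight code so all distances are even and the same bound excludes $d\geq 6$. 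Worth one added sentence, but not a gap in substance.
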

\begin{proof}
We will assume that $q \geq 3$.
For $q = 2$ Theorem  \ref{constr1} is proved similarly.
Consider the packing radius $e({\mathcal C})$ of the code ${\mathcal C}$.
Any $q$-ary   distance-2 MDS code of length $q^m$ is a $q$-ary Reed-Muller-like code of length $q^m$  and order $ r = (q - 1)m - 1$.
Since the minimum distance of a Reed-Muller-like code  of order $ r = (q - 1)m - 1$ is $2$,
and the minimum distance of a Reed-Muller-like code of order $(q - 1 )m - 2$ is $3$,
we have $e({\mathcal C}) \geq 1$.
Let us show that the covering radius $\rho({\mathcal C}) = 1$.
Let $B_1$ be the Hamming sphere of radius $1$ in the product $ \mathbb{F}_{n} \times \mathbb{F}_{q}^{n}$.
Then
$$
|{RM}_q((q - 1)m - 1,m)|\cdot|B_1| = q^{n - 1}q^{m + 1} = q^{n + m} = |\mathbb{F}_{n} \times  \mathbb{F}_{q}^{n}|.
$$
Hence $\rho({\mathcal C}) = e({\mathcal C})$ and the set ${\mathcal C}$ is a 1-perfect code in the product $ \mathbb{F}_{n} \times \mathbb{F}_{q}^{n}$.

The converse is true because, by Theorem \ref{hed1}, 1-perfect mixed  codes can be embedded in 1-perfect codes,
and in this case Reed-Muller-like codes of order $(q - 1 )m - 2$ attain the Delsarte's linear programming bound \cite{del2}.
For $m = 1$, generalized Reed-Muller codes are extended Reed-Solomon codes   \cite [p. 296]{mac}.
\end{proof}

It is easy to see that
for $n = q^m$,  $m = 1$, $q \geq 3$, there is a one-to-one correspondence between Graeco-Latin hypercubes of order $q$
and dimension $n - 1$ and  partitions of $q$-ary  distance-2 MDS codes of length $n$
into $q$-ary Reed-Muller-like  codes of  length $n$ and order $r = (q - 1)m - 2$.

Since the 1-perfect mixed code ${\mathcal C}$ from   Theorem \ref{constr1} is constructed using partitions,
we can generalize  Theorem  \ref{hed2} to the $q$-ary case.

\begin{thm} \emph{(Romanov \cite{rom1}.)}\label{rom1}
Let $n = q^m$, $m \geq 2$ and ${\mathcal C}_1', {\mathcal C}_2', \ldots, {\mathcal C}_{n}'$
be a partition of a $q$-ary  distance-2 MDS code of length $n = q^m$
into $q$-ary Reed-Muller-like codes of length $n = q^m$ and order $(q - 1)m - 2$.
Let ${\mathcal C}_1'', {\mathcal C}_2'', \ldots, {\mathcal C}_{n}'' $ be a partition of
a vector space of dimension $\frac{n - 1}{q - 1}$ over the field $\mathbb{F}_q$ into $q$-ary 1-perfect  codes  of length $\frac{n - 1}{q - 1}$.
Let $\pi$ be a permutation acting on the index set $\{1, 2, \ldots, n\}$.
Then the set
$$
{\mathcal C} = \left\{ \big({\bf u}|{\bf v}\big) \,\, \Big | \,\, {\bf u}  \in  {\mathcal C}'_i, \,  {\bf v}  \in  {\mathcal C}''_{\pi(i)}, \, \, i = 1, 2, \ldots, n\right\}
$$
is a $q$-ary 1-perfect  code of length $n + \frac{n - 1}{q - 1}$.
\end{thm}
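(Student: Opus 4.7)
The plan is to prove Theorem \ref{rom1} by showing two things separately: that ${\mathcal C}$ has minimum Hamming distance at least $3$, and that $|{\mathcal C}|$ equals the sphere-packing bound for a $q$-ary code of length $L := n + \frac{n-1}{q-1}$. Together these force ${\mathcal C}$ to be $1$-perfect.

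For the distance bound, I would take two distinct codewords $({\bf u}_1|{\bf v}_1)$ with ${\bf u}_1 \in {\mathcal C}'_i$, ${\bf v}_1 \in {\mathcal C}''_{\pi(i)}$ and $({\bf u}_2|{\bf v}_2)$ with ${\bf u}_2 \in {\mathcal C}'_j$, ${\bf v}_2 \in {\mathcal C}''_{\pi(j)}$, and split into two cases. If $i = j$, then ${\bf u}_1, {\bf u}_2$ lie in the same Reed-Muller-like code of order $(q-1)m-2$ (minimum distance $3$) and ${\bf v}_1, {\bf v}_2$ lie in the same $1$-perfect code of length $(n-1)/(q-1)$ (also minimum distance $3$); since the full codewords are distinct, at least one of the two pairs is distinct and contributes distance at least $3$. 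If $i \neq j$, then ${\bf u}_1$ and ${\bf u}_2$ are distinct members of the common distance-$2$ MDS code (distinctness because the parts of the partition are disjoint), so $d({\bf u}_1, {\bf u}_2) \geq 2$; simultaneously ${\bf v}_1$ and ${\bf v}_2$ sit in different parts of the partition of $\mathbb{F}_q^{(n-1)/(q-1)}$, hence are distinct and contribute at least $1$. In every case the total Hamming distance is at least $3$.

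For the cardinality check I would use the parameters recorded in Section \ref{sec:dif}. A Reed-Muller-like code of order $(q-1)m-2$ has size $q^{n-m-1}$, and a distance-$2$ MDS code of length $n$ has size $q^{n-1}$, so the first partition consists of exactly $n$ parts; likewise the $1$-perfect codes of length $(n-1)/(q-1)$ have size $q^{(n-1)/(q-1)-m}$ and partition the ambient space into $n$ parts. Summing $|{\mathcal C}'_i|\cdot|{\mathcal C}''_{\pi(i)}|$ over $i = 1,\ldots,n$ gives
$$
|{\mathcal C}| \;=\; n \cdot q^{n-m-1} \cdot q^{(n-1)/(q-1) - m} \;=\; q^{L-m-1}.
$$
The Hamming sphere of radius $1$ in $\mathbb{F}_q^{L}$ has cardinality $1 + L(q-1) = q^{m+1}$ (a short calculation using $L = q^m + (q^m-1)/(q-1)$), so $|{\mathcal C}|\cdot q^{m+1} = q^{L}$ and the sphere-packing bound is attained; combined with minimum distance $\geq 3$ this yields the $1$-perfect property.

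The only delicate point is the inequality $d({\bf u}_1,{\bf u}_2) \geq 2$ when $i \neq j$: it relies essentially on the fact that the parts ${\mathcal C}'_i$ are subsets of a single distance-$2$ MDS code, not merely pairwise-disjoint codes of minimum distance $3$ inside $\mathbb{F}_q^n$. Once this ambient-MDS property is used, everything else is bookkeeping with the sizes of Reed-Muller-like codes, $1$-perfect codes, and Hamming spheres.
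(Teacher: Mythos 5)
Your proof is correct, but it takes a genuinely different (and more self-contained) route than the paper. The paper never verifies the distance and cardinality of ${\mathcal C}$ directly: it obtains Theorem \ref{rom1} by composing two earlier results, namely Theorem \ref{constr1} (which turns the partition ${\mathcal C}'_1,\ldots,{\mathcal C}'_n$ into a $1$-perfect mixed code in $\mathbb{F}_n\times\mathbb{F}_q^n$) and Heden's Theorem \ref{hed1} (which substitutes the codewords of ${\mathcal C}''_{\pi(i)}$ for the $\mathbb{F}_n$-coordinate value $\omega_i$, using that a radius-$1$ sphere in $\mathbb{F}_q^{(n-1)/(q-1)}$ contains exactly $n$ words); the sphere-packing count is then inherited from those proofs, and the original reference \cite{rom1} is said to use yet another approach. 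Your case split on $i=j$ versus $i\neq j$ --- with the ambient distance-$2$ MDS code supplying $d({\bf u}_1,{\bf u}_2)\geq 2$ across different parts and the disjointness of the ${\mathcal C}''$-partition supplying the extra $1$ --- makes explicit precisely what the paper leaves implicit inside the unargued claim ``$e({\mathcal C})\geq 1$'' in the proof of Theorem \ref{hed1}; your counting ($|{\mathcal C}|=q^{L-m-1}$, $|B_1|=1+L(q-1)=q^{m+1}$ with $L=n+\frac{n-1}{q-1}$) agrees with the paper's. One cosmetic remark: for $q=2$ the Reed-Muller-like codes of order $(q-1)m-2$ have minimum distance $4$ rather than $3$, but since you only need distance at least $3$ in the case $i=j$, nothing breaks. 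What your approach buys is an explicit identification of where the hypothesis ``partition of a single distance-$2$ MDS code'' (rather than an arbitrary family of pairwise disjoint distance-$3$ codes) is genuinely used; what the paper's approach buys is brevity and the conceptual point that the theorem is an instance of the general Heden/concatenation machinery.
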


The proof of Theorem  \ref{rom1} proposed in \cite{rom1} is based on a different approach.
Theorem  \ref{rom1} allow us to construct  at least $93 241 327$  nonequivalent $1$-perfect ternary codes of length $13$ \cite{shi}.

%%%%%%%%%%%%%%%%%%%%%%%%%%%%%%%%%%%%%%%%%%%%%%%%%%%
%%%%%%%%%%%%%%%%%%%%%%%%%%%%%%%%%%%%%%%%%%%%%%%%%%%

%%%%%%%%%%%%%%%%%%%%%%%%%%%%%%%%%%%%%%%%%%%%%%%%%%%
%%%%%%%%%%%%%%%%%%%%%%%%%%%%%%%%%%%%%%%%%%%%%%%%%%%

\section{Lower bound on the number of nonequivalent 1-perfect mixed codes.}\label{sec:low}

%%%%%%%%%%%%%%%%%%%%%%%%%%%%%%%%%%%%%%%%%%%%%%%%%%%

In this section, we construct partitions of $q$-ary distance-2 MDS codes into $q$-ary Reed-Muller-like  codes of   order $(q - 1)m_1 - 2$.
Our construction is based on the method proposed by Phelps in \cite{ph2}.
We also prove a lower bound on the number of nonequivalent   1-perfect mixed codes in the product $ \mathbb{F}_{n} \times \mathbb{F}_{q}^{n}$.

\begin{prop} \label{Pr1}
Let $m_1$ and $m_2$ be  positive integers.
Let ${\mathcal A}^1, {\mathcal A}^2, \ldots, {\mathcal A}^q$ be a partition of $\mathbb{F}_q^{q^{m_1}}$ into $q$-ary   distance-2 MDS codes of length $q^{m_1}$.
Let ${\mathcal B}$  be a $q$-ary distance-2 MDS codes of length $q^{m_2}$.
Let there be a one-to-one correspondence between the superscripts $\{1, 2, \ldots, q\}$ and elements of the field $\mathbb{F}_{q}$.
Then the set
\begin{multline}\label{eq7}
{\mathcal C} =  \Big\{\big({\bf u}_1| {\bf u}_2 | \cdots | {\bf u}_{q^{m_2}} \big)\, \,  \Big|  \, \,
{\bf u}_i \in {\mathcal A}^{v_i}, \,
{\bf v} = (v_1, v_2, \ldots, v_{q^{m_2}}) \in {\mathcal B},
\,\, i = 1, 2, \ldots, {q^{m_2}}\Big\}
\end{multline}
is a $q$-ary   distance-2 MDS code of  length $ n = q^{m_1 + m_2}$.
\end{prop}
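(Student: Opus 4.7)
The plan is to invoke the standard equivalence: a $q$-ary code of length $N$ is distance-2 MDS if and only if it has exactly $q^{N-1}$ codewords and minimum distance at least $2$ (Singleton with equality). So I will just verify these two facts about $\mathcal{C}$.

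First I would read off the length and count the codewords. Each block $\mathbf{u}_i$ has length $q^{m_1}$ and there are $q^{m_2}$ blocks, giving total length $q^{m_1+m_2}$. Every $\mathcal{A}^k$ has cardinality $q^{q^{m_1}-1}$ (the common size of a distance-2 MDS code of length $q^{m_1}$), and $|\mathcal{B}| = q^{q^{m_2}-1}$. Since, given $\mathbf{v}\in\mathcal{B}$, the choices of $\mathbf{u}_1,\ldots,\mathbf{u}_{q^{m_2}}$ are independent, and distinct data $(\mathbf{v};\mathbf{u}_1,\ldots,\mathbf{u}_{q^{m_2}})$ produce distinct concatenations (the blocks decode $\mathbf{u}_i$ and, via the partition, determine each $v_i$), we obtain
\[
|\mathcal{C}| \;=\; q^{q^{m_2}-1}\bigl(q^{q^{m_1}-1}\bigr)^{q^{m_2}} \;=\; q^{\,q^{m_1+m_2}-1},
\]
which matches the MDS size.

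Next I would bound the Hamming distance between two distinct codewords $\mathbf{c},\mathbf{c}'$ by splitting on whether the associated $\mathbf{v},\mathbf{v}'\in\mathcal{B}$ coincide. If $\mathbf{v}=\mathbf{v}'$, then in any block $i$ where $\mathbf{u}_i\ne\mathbf{u}_i'$ both tuples lie in the same code $\mathcal{A}^{v_i}$ of minimum distance $2$, so that block alone contributes at least two coordinate disagreements. If $\mathbf{v}\ne\mathbf{v}'$, then $\mathbf{v}$ and $\mathbf{v}'$ differ in at least two positions $i,j$ because $\mathcal{B}$ has minimum distance $2$; at each such position the partition property forces $\mathbf{u}_i\in\mathcal{A}^{v_i}$ and $\mathbf{u}_i'\in\mathcal{A}^{v_i'}$ to belong to disjoint classes, hence $\mathbf{u}_i\ne\mathbf{u}_i'$, contributing at least one coordinate disagreement in each of blocks $i$ and $j$. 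In either case $d(\mathbf{c},\mathbf{c}')\geq 2$.

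The construction is essentially a generalized-concatenation template, and no deep obstacle arises. The only subtlety is to handle the second case with disjointness of the partition rather than the internal minimum distance of a single $\mathcal{A}^k$, so that a single coordinate disagreement per differing $\mathbf{v}$-position suffices once the $\mathbf{v}$'s are known to differ in at least two positions. Combining the size count with the distance bound puts $\mathcal{C}$ on the Singleton bound with equality and establishes the MDS conclusion.
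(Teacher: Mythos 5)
Your proposal is correct and follows essentially the same route as the paper's proof: the same length and cardinality count giving $q^{q^{m_1+m_2}-1}$ codewords, and the same case split on whether the underlying $\mathcal{B}$-words coincide, using the internal minimum distance of an $\mathcal{A}^{v_i}$ in one case and the disjointness of the partition classes in the other. Your explicit appeal to the Singleton-bound characterization and the injectivity remark in the counting step are minor tidy additions, not a different method.
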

\begin{proof}
Let us show that the code ${\mathcal C}$ has the same parameters as the
$q$-ary  distance-2 MDS code of  length $q^{m_1 + m_2}$.

Since the length of the code ${\mathcal B}$ is $q^{m_2}$ and for each
$i = 1, 2,  \ldots, {q^{m_2}}$ the length of the  code  ${\mathcal C}^{v_i}$  is $q^{m_1}$,
it follows from (\ref{eq7}) that the length of the code ${\mathcal C}$ is $q^{m_1 + m_2}$.

Now we will show that the number of codewords in the code ${\mathcal C}$  is $q^{q^{m_1 + \,m_2} - 1}$.

Since $|{\mathcal A}^{v_i}| = q^{q^{m_1}  - 1}$ for all  $i \in \{1, 2, \ldots ,{q^{m_2}}\}$,
it follows from (\ref{eq7})  that for each ${\bf v} \in {\mathcal B}$ one can construct
$$
|{\mathcal A}^{v_i}|^{q^{m_2}} = \big(\,q^{m_1 - 1}\big)^{q^{m_2}} = q^{q^{m_1 + \,m_2} - q^{m_2}}
$$
codewords of the code ${\mathcal C}$.
Since $|{\mathcal B}| = q^{q^{m_2} - 1}$, we have
$$
|{\mathcal C}| = |{\mathcal B}|\,q^{q^{m_1 + \,m_2} - q^{m_2}} =  q^{q^{m_1 + \, m_2}  - 1}.
$$

Next, we will show that the minimum  distance of ${\mathcal C}$ is 2.
Let ${\bf u} = \big({\bf u}_1| {\bf u}_2 | \cdots | {\bf u}_{q^{m_2}} \big)$
and ${\bf u}' = \big({\bf u}'_1| {\bf u}'_2 | \cdots | {\bf u}'_{q^{m_2}} \big)$
be codewords in ${\mathcal C}$.
Then
$$
d({\bf u},{\bf u}') \geq \sum_{i = 1}^{\,{q^{m_2}}} d({\bf u}_i, {\bf u}'_i),
$$
where  words  ${\bf u}_i$ and ${\bf u}'_i$ have length $q^{m_1}$ for  all $i \in \{1, 2, \ldots, q^{m_2}\}$.
The codewords ${\bf u}$, ${\bf u}'$ correspond to the codewords ${\bf v} = (v_1, v_2, \ldots, v_{q^{m_2}})$,
${\bf v}' = (v'_1, v'_2, \ldots, v'_{q^{m_2}})$  of the code ${\mathcal B}$.
If ${\bf u}_i = {\bf u}'_i$ then $v_i = v'_i$. If $v_i \neq v'_i$ then  $d({\bf u}_i, {\bf u}'_i) \geq 1$.
Therefore, if  $d({\bf v}, {\bf v}') \geq 2$, then $d({\bf u}_i,{\bf u}'_i) \geq 1$ for  at least two values of $i$.
Thus, we conclude that if  ${\bf v} \neq {\bf v}'$ then
$$
\sum_{i = 1}^{\,{q^{m_2}}} d({\bf u}_i,{\bf u}'_i) \geq 2.
$$

If  ${\bf v} = {\bf v}'$ then ${\bf u}_i$ and ${\bf u}'_i$
belong to the same element of the partition ${\mathcal A}^1, {\mathcal A}^2, \ldots, {\mathcal A}^q$
for  all $i \in \{1, 2, \ldots, q^{m_2}\}$.
Therefore, if  ${\bf v} = {\bf v}'$ and  ${\bf u}_i \neq {\bf u}'_i$ then $d({\bf u}_i,{\bf u}'_i) \geq 2 $.
\end{proof}

\begin{thm} \label{th6}
Let $m_1$ and $m_2$ be  positive integers.
Let ${\mathcal A}^1, {\mathcal A}^2, \ldots, {\mathcal A}^q$ be a partition of $\mathbb{F}_q^{q^{m_1}}$ into $q$-ary  distance-2 MDS  codes of length $q^{m_1}$.
For each  $k \in \{1, 2, \ldots, q\}$,
let
$$
{\mathcal A}^k_1, {\mathcal A}^k_2, \ldots, {\mathcal A}^k_{q^{m_1}}
$$
be a  partition of ${\mathcal A}^k$ into $q$-ary  Reed-Muller-like  codes of  length $q^{m_1}$ and order $(q - 1)m_1 - 2$.
Let ${\mathcal B}$  be a $q$-ary Reed-Muller-like  code of  length $q^{m_2}$ and order $(q - 1)m_2 - 2$.
For each codeword ${\bf v} \in {\mathcal B}$, we define a $(q^{m_2} - 1)$-ary
quasigroup ${\bf q_v}$ of order $q^{m_1}$.
A quasigroup is defined on the set of indexes $\{1, 2, \ldots, q^{m_1}\}$.
Let there be a one-to-one correspondence between the superscripts $\{1, 2, \ldots, q\}$ and elements of the field $\mathbb{F}_{q}$.
Then the set
\begin{multline}\label{eq8}
{\mathcal C} = \Bigg\{\big({\bf u}_1| {\bf u}_2 | \cdots | {\bf u}_{q^{m_2}} \big)\, \,  \Big|  \, \,
{\bf u}_i \in {\mathcal A}_{j_i}^{v_i}, \\
{\bf v} = (v_1, v_2, \ldots, v_{q^{m_2}}) \in {\mathcal B},
\,\, j_1 =  {\bf q_v}(j_2, j_3, \ldots, j_{q^{m_2}}), \\
j_i \in \{ 1, 2, \ldots,   {q^{m_1}} \},
\,\, i = 1, 2, \ldots, {q^{m_2}} \Bigg\}
\end{multline}
is a $q$-ary  Reed-Muller-like  code of  length $ q^{m_1 + m_2}$ and order $(q - 1)(m_1 + m_2) - 2$.

Assume that the code ${\mathcal B}$  belongs to a partition of a $q$-ary  distance-2 MDS code of length $q^{m_2}$
into $q$-ary Reed-Muller-like  codes of  length $q^{m_2}$ and order $(q - 1)m_2 - 2$.
Then the code ${\mathcal C}$ also belongs to a partition of some  $q$-ary  distance-2 MDS code of length $q^{m_1 + m_2}$
into $q$-ary Reed-Muller-like  codes of  length $q^{m_1 + m_2}$ and order $(q - 1)(m_1 + m_2) - 2$.
\end{thm}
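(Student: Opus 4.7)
The plan is to verify that $\mathcal{C}$ has the three parameters of a Reed-Muller-like code of length $q^{m_1+m_2}$ and order $(q-1)(m_1+m_2)-2$, and then to exhibit the ambient partition claimed in the second assertion.

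The length $q^{m_1+m_2}$ is immediate from the concatenation. For the cardinality I would count: pick $\mathbf{v}\in\mathcal{B}$ in $|\mathcal{B}|=q^{q^{m_2}-m_2-1}$ ways; freely choose $(j_2,\ldots,j_{q^{m_2}})\in\{1,\ldots,q^{m_1}\}^{q^{m_2}-1}$, which determines $j_1$ uniquely via $\mathbf{q_v}$; and finally pick each $\mathbf{u}_i\in\mathcal{A}^{v_i}_{j_i}$, contributing $q^{q^{m_1}-m_1-1}$ per block. Collecting exponents yields $q^{q^{m_1+m_2}-m_1-m_2-1}$, matching the dimension formula~(\ref{eq1}). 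For the minimum distance I would compare two codewords $\mathbf{u}\neq\mathbf{u}'$ with underlying $\mathbf{v},\mathbf{v}'\in\mathcal{B}$ and index tuples $(j_i),(j'_i)$, splitting into cases. If $\mathbf{v}\neq\mathbf{v}'$, they disagree in at least $d(\mathcal{B})$ positions, and at each such $i$ the blocks lie in distinct classes $\mathcal{A}^{v_i}\neq\mathcal{A}^{v'_i}$ of the partition of $\mathbb{F}_q^{q^{m_1}}$, hence differ in at least one coordinate, giving $d(\mathbf{u},\mathbf{u}')\geq d(\mathcal{B})$ ($3$ for $q\geq 3$, $4$ for $q=2$). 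If $\mathbf{v}=\mathbf{v}'$, set $T=\{i:j_i\neq j'_i\}$: the case $|T|=1$ is ruled out by unique invertibility of $\mathbf{q_v}$ in each argument; $T=\varnothing$ forces all block pairs into a common RM-like subcode $\mathcal{A}^{v_i}_{j_i}$ of minimum distance $3$ (or $4$); and $|T|\geq 2$ places the blocks at each $i\in T$ in distinct RM-like parts of the same distance-$2$ MDS code $\mathcal{A}^{v_i}$, each contributing $\geq 2$ and summing to $\geq 4$.

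For the partition claim I would apply Proposition~\ref{Pr1} with the ambient distance-$2$ MDS code $\mathcal{B}^{\star}$ of length $q^{m_2}$ containing $\mathcal{B}$ to produce a distance-$2$ MDS code $\mathcal{D}$ of length $q^{m_1+m_2}$, which visibly contains $\mathcal{C}$. Fix a partition $\mathcal{B}_1=\mathcal{B},\mathcal{B}_2,\ldots,\mathcal{B}_{q^{m_2}}$ of $\mathcal{B}^{\star}$ into RM-like codes of order $(q-1)m_2-2$, and for each $l$ choose quasigroups $\mathbf{q}_{\mathbf{v}}^{(l)}$ for $\mathbf{v}\in\mathcal{B}_l$. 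For each $t\in\{1,\ldots,q^{m_1}\}$ define the shifted quasigroups $\mathbf{q}_{\mathbf{v}}^{(l,t)}:=\mathbf{q}_{\mathbf{v}}^{(l)}+(t-1)\pmod{q^{m_1}}$ (still quasigroups, since shifting the output by a constant preserves unique invertibility in each argument), and let $\mathcal{C}_{l,t}$ be the code produced by~(\ref{eq8}) with $(\mathcal{B}_l,\mathbf{q}^{(l,t)})$ in place of $(\mathcal{B},\mathbf{q})$. By the first part of the theorem each $\mathcal{C}_{l,t}$ is RM-like of the required order, and $\mathcal{C}$ appears among them. Disjointness holds because distinct $l$'s give disjoint $\mathbf{v}$-prefixes, while fixed $l$ with different $t$'s force different $j_1$ for any common $(j_2,\ldots,j_{q^{m_2}})$; the $q^{m_2}\cdot q^{m_1}=q^{m_1+m_2}$ pieces of equal size then exhaust $|\mathcal{D}|$.

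The step I expect to be the main obstacle is the partition claim: the first part of the theorem supplies only a single RM-like code, and the genuine work lies in coordinating quasigroups across all subcodes $\mathcal{B}_l$ so that the resulting family $\{\mathcal{C}_{l,t}\}$ tiles $\mathcal{D}$ without overlap. The shifted-quasigroup recipe above is the natural mechanism, but verifying disjointness carefully and matching the cardinality count is where attention is required; by contrast the minimum-distance bookkeeping is routine once the inversion property of $\mathbf{q_v}$ is invoked to eliminate the $|T|=1$ case.
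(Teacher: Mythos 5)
Your proof is correct and follows essentially the same route as the paper: the same length and cardinality count, the same case analysis on $\mathbf{v}\neq\mathbf{v}'$ versus $\mathbf{v}=\mathbf{v}'$ for the minimum distance (with the quasigroup's unique invertibility ruling out a single changed index), and the same appeal to Proposition~\ref{Pr1} to obtain the ambient distance-2 MDS code for the partition claim. Your shifted-quasigroup construction of the $q^{m_1}$ pairwise disjoint codes arising from each part $\mathcal{B}_l$ is a correct, explicit realization of a step that the paper's proof only asserts.
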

\begin{proof}
First we show that the code ${\mathcal C}$  has the same parameters
as the generalized  Reed-Muller codes of  length $q^{m_1 + m_2}$ and order $(q - 1)(m_1 + m_2) - 2$.

Since the length of the code ${\mathcal B}$ is $q^{m_2}$ and for each
$j_i = 1, 2, \ldots,   {q^{m_1}}$, $ i = 1, 2,  \ldots, {q^{m_2}}$ the length of the  code  ${\mathcal A}_{j_i}^{v_i}$  is $q^{m_1}$,
it follows from (\ref{eq8}) that the length of the code ${\mathcal C}$ is $q^{m_1 + m_2}$.

Now we will show that the number of codewords in the code ${\mathcal C}$  is $q^{q^{m_1 + \,m_2}- (m_1 + m_2) - 1}$.
For each  $k \in \{1, 2, \ldots, q\}$,  the code ${\mathcal A}^k$
is  $q$-ary   distance-2 MDS code of length $q^{m_1}$  contains $q^{q^{m_1} - 1}$ codewords.
Each ${\mathcal A}^k$ is divided into $q^{m_1}$ subcodes ${\mathcal A}^k_1, {\mathcal A}^k_2, \ldots, {\mathcal A}^k_{q^{m_1}}$.
Consequently,
$$
|{\mathcal A}_{j_i}^{v_i}|\,q^{m_1} = q^{q^{m_1} - 1},
$$
which implies that $|{\mathcal A}_{j_i}^{v_i}| = q^{q^{m_1} - m_1 - 1}$ for $j_i \in \{1, 2, \ldots, {q^{m_1}}\}$ and $i = 1, 2, \ldots ,{q^{m_2}}$.
It follows from (\ref{eq8})  that for each ${\bf v} \in {\mathcal B}$ one can construct
$$
|{\mathcal A}_{j_1}^{v_1}|\cdot|{\mathcal A}_{j_i}^{v_i}|^{q^{m_2} - 1}\big(\,q^{m_1}\big)^{q^{m_2} - 1} =
|{\mathcal A}_{j_1}^{v_1}|\cdot\Big(|{\mathcal A}_{j_i}^{v_i}|\,q^{m_1}\Big)^{q^{m_2} - 1} =
q^{q^{m_1 + \,m_2} - q^{m_2} - m_1 }
$$
codewords of the code ${\mathcal C}$.
Since $|{\mathcal B}| = q^{q^{m_2} - m_2 - 1}$, we have
$$
|{\mathcal C}| = |{\mathcal B}|\,q^{q^{m_1 + \,m_2} - q^{m_2} - m_1 } =  q^{q^{m_1 + \, m_2} - (m_1 + \,m_2) - 1}.
$$

Next, we will show that the minimum  distance of ${\mathcal C}$ is 4 for $q = 2$ and 3 for $q \geq 3$.
Let ${\bf u} = \big({\bf u}_1| {\bf u}_2 | \cdots | {\bf u}_{q^{m_2}} \big)$
and ${\bf u}' = \big({\bf u}'_1| {\bf u}'_2 | \cdots | {\bf u}'_{q^{m_2}} \big)$
be codewords in ${\mathcal C}$.
Then
$$
d({\bf u},{\bf u}') \geq \sum_{i = 1}^{\,{q^{m_2}}} d({\bf u}_i, {\bf u}'_i),
$$
where  words  ${\bf u}_i$ and ${\bf u}'_i$ have length $q^{m_1}$ for  all $i \in \{1, 2, \ldots, q^{m_2}\}$.
The codewords ${\bf u}$, ${\bf u}'$ correspond to the codewords ${\bf v} = (v_1, v_2, \ldots, v_{q^{m_2}})$,
${\bf v}' = (v'_1, v'_2, \ldots, v'_{q^{m_2}})$  of the code ${\mathcal B}$.
If ${\bf u}_i = {\bf u}'_i$ then $v_i = v'_i$. If $v_i \neq v'_i$ then  $d({\bf u}_i, {\bf u}'_i) \geq 1$.
Therefore, if  $q \geq 3$ and $d({\bf v}, {\bf v}') \geq 3$, then $d({\bf u}_i,{\bf u}'_i) \geq 1$ for  at least three values of $i$.
If  $q = 2$ and $d({\bf v}, {\bf v}') \geq 4$, then $d({\bf u}_i, {\bf u}'_i) \geq 1$ for  at least four values of $i$.
Thus, we conclude that if  ${\bf v} \neq {\bf v}'$ then
$$
\sum_{i = 1}^{\,{q^{m_2}}} d({\bf u}_i,{\bf u}'_i) \geq 4  \, \, \,  \mbox{for} \, \, \,  q = 2,
$$
$$
\sum_{i = 1}^{\,{q^{m_2}}} d({\bf u}_i,{\bf u}'_i) \geq 3  \, \, \, \mbox{for} \, \, \,  q \geq 3.
$$

If  ${\bf v} = {\bf v}'$ then ${\bf u}_i$ and ${\bf u}'_i$
belong to the same element of the partition ${\mathcal A}^1, {\mathcal A}^2, \ldots, {\mathcal A}^q$
for  all $i \in \{1, 2, \ldots, q^{m_2}\}$.
Therefore, if  ${\bf v} = {\bf v}'$ and  ${\bf u}_i \neq {\bf u}'_i$ then $d({\bf u}_i,{\bf u}'_i) \geq 2 $.
Assume that  ${\bf u}_i \in {\mathcal A}_{j_i}^{v_i}$ and ${\bf u}'_i \in {\mathcal A}_{j_x}^{v_i}$ where $i = 1, 2, \ldots, q^{m_2}$.
Then  the equality $d({\bf u}_i, {\bf u}'_i) = 0$ means that $j_i = j_x$.
Since  ${\bf j} = (j_1, j_2, \ldots , j_{q^{m_2}})$ and ${\bf x} = (x_1, x_2, \ldots , x_{q^{m_2}})$ are defined by a  quasigroup,
we conclude that if ${\bf j} \neq {\bf x}$ then $d({\bf u}_i,{\bf u}'_i) \geq 2 $ for at least two values of $i$.
Hence, $d({\bf u},{\bf u}') \geq 4$ except for the case where ${\bf j} = {\bf x}$.
However, in this case,  ${\bf u}_i \neq {\bf u}'_i$ means
that ${\bf u}_i$ and ${\bf u}'_i$ belong to the same element of the partition of ${\mathcal A}^k$ into   Reed-Muller-like  codes.

Let us now prove the second part of Theorem \ref{th6}.
Assume that the code ${\mathcal B}$  belongs to a partition of a $q$-ary  distance-2 MDS code of length $q^{m_2}$
into $q$-ary Reed-Muller-like  codes of  length $q^{m_2}$ and order $(q - 1)m_2 - 2$.

For $q \geq 3$, the generalized  Reed-Muller code ${RM}_q(r,m)$ of order $ r = (q - 1)m - 2$
and length $n = q^m$  has parameters $[n = q^m, n - m - 1, 3]_q$  \cite{rom2}.
In the binary case, the code ${RM}_q(r,m)$ of order $r = (q - 1)m - 2$  has  parameters $[n = 2^m, n - m - 1, 4]$.
The $q$-ary   distance-2 MDS code of length $n = q^m$ has parameters $(n = q^m, q^{n  - 1}, 2)_q$.
Therefore, a partition of a $q$-ary  distance-2 MDS code of length $q^{m_1}$
into $q$-ary Reed-Muller-like  codes of  length $q^{m_1}$ and order $(q - 1)m_2 - 2$
contains $q^{m_1}$ elements.
Consequently, each element of the partition  of a $q$-ary  distance-2 MDS code of length $q^{m_2}$
into $q$-ary Reed-Muller-like  codes of  length $q^{m_2}$ and order $(q - 1)m_2 - 2$
allows us to construct $q^{m_1}$ pairwise disjoint codes with
parameters of a $q$-ary Reed-Muller-like  codes of  length $q^{m_1 + m_2}$ and order $(q - 1)(m_1 + m_2) - 2$.
Any partition of a $q$-ary  distance-2 MDS code of length $q^{m_2}$
into $q$-ary Reed-Muller-like  codes of  length $q^{m_2}$ and order $(q - 1)m_2 - 2$ contains $q^{m_2}$ elements.
Therefore, we can construct $q^{m_1 + m_2}$ pairwise
disjoint $q$-ary Reed-Muller-like  codes of  length $q^{m_1 + m_2}$ and order $(q - 1)(m_1 + m_2) - 2$.
By Proposition \ref{Pr1}, these codes form a partition of some $q$-ary  distance-2 MDS code of length $q^{m_1 + m_2}$.
\end{proof}

\begin{thm} \label{th7}
The number of different partitions of some $q$-ary  distance-2 MDS code of length $q^{m_1 + m_2}$
into $q$-ary Reed-Muller-like  codes of  length $q^{m_1 + m_2}$ and order $(q - 1)(m_1 + m_2) - 2$ is greater than
\begin{equation}\label{eq313}
\big(\, q^{m_1}\big)^{\Big(\big(q^{m_1}\big)^2 - \big(q^{m_1}\big)\Big)\big(q^{m_2} - 2\big)q^{{q^{m_2}} - {m_2} - 1}},
\end{equation}
where $q^{m_1 + m_2}$ is a  sufficiently large number.
\end{thm}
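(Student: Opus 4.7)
The plan is to fix all of the data of Theorem~\ref{th6} except for the quasigroups ${\bf q}_{\bf v}$ attached to codewords of one distinguished Reed-Muller-like code ${\mathcal B}$ in the partition of the length-$q^{m_2}$ distance-2 MDS code, and to show that varying these $|{\mathcal B}|$ quasigroups already produces the required number of distinct partitions of a length-$q^{m_1+m_2}$ MDS code.

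I would begin by fixing a partition ${\mathcal B}_1, \ldots, {\mathcal B}_{q^{m_2}}$ of some length-$q^{m_2}$ distance-2 MDS code into order-$(q-1)m_2-2$ Reed-Muller-like codes, together with the partitions ${\mathcal A}^k_j$ on the $\mathbb{F}_q^{q^{m_1}}$ side and the quasigroups ${\bf q}_{\bf v}$ for ${\bf v} \in {\mathcal B}_i$ with $i \geq 2$ (say, all set equal to summation in $\mathbb{F}_{q^{m_1}}$). By the second part of Theorem~\ref{th6}, any choice of the remaining quasigroups $\{{\bf q}_{\bf v} : {\bf v} \in {\mathcal B}_1\}$ completes these data to a partition of a length-$q^{m_1+m_2}$ distance-2 MDS code into Reed-Muller-like codes of order $(q-1)(m_1+m_2)-2$. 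Modifying only the quasigroups attached to ${\mathcal B}_1$ alters only the $q^{m_1}$ Reed-Muller-like blocks arising from ${\mathcal B}_1$; the remaining $(q^{m_2}-1)q^{m_1}$ blocks are unchanged. So it is enough to lower bound the number of tuples $({\bf q}_{\bf v})_{{\bf v} \in {\mathcal B}_1}$ that yield distinct families of these $q^{m_1}$ blocks.

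The core step is, for each ${\bf v}$, to produce at least $(q^{m_1})^{q^{m_1}(q^{m_1}-1)(q^{m_2}-2)}$ pairwise distinct $(q^{m_2}-1)$-ary quasigroups of order $q^{m_1}$ via a Phelps-style switching argument analogous to that of \cite{ph2}. Starting from the base quasigroup $f_0(y_1, \ldots, y_{q^{m_2}-1}) = y_1 + y_2 + \cdots + y_{q^{m_2}-1}$ in $\mathbb{F}_{q^{m_1}}$, I introduce for each ordered pair of distinct symbols $(\alpha, \beta) \in \{1, \ldots, q^{m_1}\}^2$ (contributing the factor $q^{m_1}(q^{m_1}-1)$) and for each of the $q^{m_2}-2$ positions beyond the first an independent switch taking one of $q^{m_1}$ values. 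These switches can be composed in any combination and still yield a genuine quasigroup, and distinct parameter tuples give distinct quasigroups. That distinct tuples $({\bf q}_{\bf v})_{{\bf v} \in {\mathcal B}_1}$ produce distinct partitions then follows because, for any codeword $({\bf u}_1|\cdots|{\bf u}_{q^{m_2}})$ lying in a block generated from ${\mathcal B}_1$, the fixed partitions ${\mathcal A}^k$ and ${\mathcal A}^k_j$ determine ${\bf v} \in {\mathcal B}_1$ and the indices $(j_1, \ldots, j_{q^{m_2}})$, whence the relation $j_1 = {\bf q}_{\bf v}(j_2, \ldots, j_{q^{m_2}})$ reveals the entire table of ${\bf q}_{\bf v}$ (every input occurs by the quasigroup property). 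Multiplying by $|{\mathcal B}_1| = q^{q^{m_2}-m_2-1}$ yields the stated bound.

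The main obstacle is the quasigroup count: one must produce enough $(q^{m_2}-1)$-ary quasigroups of order $q^{m_1}$ while certifying that the switchings really do remain quasigroups. An inductive construction on arity---combining a $(q^{m_2}-2)$-ary quasigroup $g$ with a Latin square $L$ via $(y_1, \ldots, y_{q^{m_2}-1}) \mapsto L(g(y_1, \ldots, y_{q^{m_2}-2}), y_{q^{m_2}-1})$, and iterating Phelps switchings at each stage---gives an explicit route that realizes the required exponent. The recoverability step described in the third paragraph is then routine bookkeeping, since each codeword's coordinates ${\bf u}_i$ lie in unique blocks of the fixed partitions ${\mathcal A}^k_j$.
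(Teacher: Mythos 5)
Your proposal takes essentially the same route as the paper's proof: both fix the data of Theorem \ref{th6}, let the quasigroups ${\bf q_v}$ range independently over the $|{\mathcal B}| = q^{q^{m_2}-m_2-1}$ codewords of ${\mathcal B}$, and invoke the Phelps bound $\big(q^{m_1}\big)^{\big((q^{m_1})^2-q^{m_1}\big)\big(q^{m_2}-2\big)}$ of \cite{ph2} on the number of $(q^{m_2}-1)$-ary quasigroups of order $q^{m_1}$, so that raising it to the power $|{\mathcal B}|$ gives (\ref{eq313}). The only substantive differences are that the paper simply cites \cite{ph2} for the quasigroup count (restricted to quasigroups with ${\bf q}(1,\ldots,1)=1$) and separates partitions via the zero codeword, whereas you re-derive the count by a switching construction---the one soft spot, since the claim that the switches compose freely into genuine quasigroups is asserted rather than proved, and you would do better to cite the known bound---and you establish distinctness by recovering each ${\bf q_v}$ from the blocks, which usefully makes explicit an injectivity step the paper leaves implicit.
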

\begin{proof}
Let $Q\big(q^{m_2} - 1, q^{m_1}\big)$ denote the set of all $(q^{m_2} - 1)$-ary quasigroups  of order $q^{m_1}$.
It was shown in \cite{ph2} that for sufficiently large $m_1$ the following inequality holds:
\begin{equation}\label{eq314}
\big|Q\big(q^{m_2} - 1, q^{m_1}\big)\big|  \geq
\big(\, q^{m_1}\big)^{\Big(\big(q^{m_1}\big)^2 - \big(q^{m_1}\big)\Big)\big(q^{m_2} - 2\big)}.
\end{equation}
Next, consider the construction (\ref{eq8}) of $q$-ary Reed-Muller-like  codes of  length $q^{m_1 + m_2}$ and order $(q - 1)(m_1 + m_2) - 2$.
Since for each codeword ${\bf v}\in {\mathcal B}$  we can choose any quasigroup
$$
{\bf q_v} \in Q\big(q^{m_2} - 1, q^{m_1}\big).
$$
Then starting from one fixed partition ${\mathcal A}^k_1, {\mathcal A}^k_2, \ldots, {\mathcal A}^k_{q^{m_1}}$ of ${\mathcal A}^k$, we can construct more than
\begin{equation}\label{eq315}
\big|Q\big(q^{m_2} - 1, q^{m_1}\big)\big|^{|\mathcal B|}
\end{equation}
different $q$-ary Reed-Muller-like  codes of  length $q^{m_1 + m_2}$ and order $(q - 1)(m_1 + m_2) - 2$.
We have that
$$
|\mathcal B| = q^{{q^{m_2}} - {m_2} - 1}.
$$
Therefore
\begin{equation}\label{eq316}
\big|Q\big(q^{m_2} - 1, q^{m_1}\big)\big|^{|\mathcal B|}  \geq
\big(\, q^{m_1}\big)^{\Big(\big(q^{m_1}\big)^2 - \big(q^{m_1}\big)\Big)\big(q^{m_2} - 2\big)q^{{q^{m_2}} - {m_2} - 1}}.
\end{equation}

It was shown in \cite{ph2} that inequality (\ref{eq314}) is also valid for quasigroups $\bf q$ such that
$$
{\bf q}(1, 1, . . . , 1) = 1.
$$
Assume that the zero word always belongs to  ${\mathcal A}^1$ in the  partition ${\mathcal A}^1, {\mathcal A}^2, \ldots, {\mathcal A}^q$.
Since we assume that the zero word always belongs to the code,
it follows from (\ref{eq8}) that the zero word  belongs to the $\mathcal C$.
Since each code ${\mathcal C}$ constructed using Theorem 5 contains the zero word
and can belong to a partition of some  $q$-ary   distance-2 MDS code of length $q^{m_1 + m_2}$
into $q$-ary Reed-Muller-like  codes of  length $q^{m_1 + m_2}$ and order $(q - 1)(m_1 + m_2) - 2$,
we conclude that  Theorem \ref{th7} holds.
\end{proof}

\begin{cor}
\label{Cl:2}
The number of nonequivalent 1-perfect mixed codes in the product $\mathbb{F}_{n} \times \mathbb{F}_{q}^{n}$
is greater than $q^{q^{cn}}$,
where the constant $c$ satisfies $c < 1$, $n = q^m$ and $m$ is a  sufficiently large positive integer.
\end{cor}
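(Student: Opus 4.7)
The plan is to combine the partition-to-code correspondence of Theorem \ref{constr1} with the partition count of Theorem \ref{th7}, and then verify that the equivalence group is asymptotically negligible. By Theorem \ref{constr1}, every labeled partition $\mathcal{C}'_1,\dots,\mathcal{C}'_n$ of a $q$-ary distance-2 MDS code of length $n=q^m$ into Reed-Muller-like codes of order $(q-1)m-2$ yields a 1-perfect mixed code $\mathcal{C}\subset \mathbb{F}_n\times \mathbb{F}_q^n$, and different labeled partitions give different codes, since each $\mathcal{C}'_i$ can be recovered from $\mathcal{C}$ as the slice $\{\mathbf{v}:(\omega_i|\mathbf{v})\in\mathcal{C}\}$. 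Hence any lower bound on the number of labeled partitions of distance-2 MDS codes into Reed-Muller-like codes of order $(q-1)m-2$ transfers to a lower bound on the number of distinct 1-perfect mixed codes in $\mathbb{F}_n\times \mathbb{F}_q^n$.

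Next, I would write $m=m_1+m_2$ with $m_1$ fixed and large enough that the Phelps-type quasigroup bound used in the proof of Theorem \ref{th7} applies, and let $m_2$ grow. Theorem \ref{th7} then guarantees at least
\[
N_{\mathrm{part}}(m) \;\geq\; q^{\,m_1(q^{2m_1}-q^{m_1})(q^{m_2}-2)\,q^{q^{m_2}-m_2-1}}
\]
distinct partitions, and hence at least $N_{\mathrm{part}}(m)$ distinct 1-perfect mixed codes in $\mathbb{F}_n\times \mathbb{F}_q^n$.

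To pass from distinct codes to nonequivalent ones, I would bound the equivalence group of $\mathbb{F}_n\times \mathbb{F}_q^n$: an equivalence consists of a symbol permutation on the $\mathbb{F}_n$-coordinate, a permutation of the $n$ coordinates over $\mathbb{F}_q$, and a symbol permutation on each of those coordinates, so $|G|\leq (n!)^2(q!)^n \leq q^{Cnm}$ for an absolute constant $C$. Each equivalence class contains at most $|G|$ codes, so the number of nonequivalent codes is at least $N_{\mathrm{part}}(m)/|G|$.

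Finally, I would take $\log_q$ twice and compare. As $m_2\to\infty$ with $m_1$ fixed, the exponent in $N_{\mathrm{part}}(m)$ is asymptotic to $m_1(q^{2m_1}-q^{m_1})\,q^{q^{m_2}-1}$, so
\[
\log_q N_{\mathrm{part}}(m) \;\geq\; q^{\,q^{m_2}-1}(1-o(1)) \;=\; q^{\,n/q^{m_1}-1}(1-o(1)),
\]
using $q^{m_2}=n/q^{m_1}$, and the correction $\log_q|G|=O(n\log n)$ is negligible against this doubly-exponential lower bound. Choosing any fixed constant $c$ with $0<c<1/q^{m_1}$---automatically $<1$ since $m_1\geq 1$---yields $\log_q(\#\text{nonequivalent codes})\geq q^{cn}$, i.e.\ at least $q^{q^{cn}}$ nonequivalent 1-perfect mixed codes for all sufficiently large $m$. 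The only delicate point I anticipate is the parameter balance: $m_1$ must be large enough for the quasigroup-count bound in Theorem \ref{th7} to apply, while $1/q^{m_1}$ must still be a fixed admissible $c<1$; since the statement asks only for some $c<1$, fixing any admissible $m_1\geq 1$ suffices, and no sharp optimization of $c$ is required.
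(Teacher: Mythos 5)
Your proposal is correct and follows essentially the same route as the paper: use Theorem \ref{constr1} to transfer the partition count of Theorem \ref{th7} to a count of distinct $1$-perfect mixed codes, then divide by the bound $n!\,n!\,(q!)^n$ on the size of an equivalence class and check that this correction is negligible against the doubly exponential count. You merely spell out the details the paper leaves implicit (recoverability of the partition from the code, and the asymptotics giving $c<1/q^{m_1}$).
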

\begin{proof}
The set of different $1$-perfect mixed codes in the Cartesian product
$\mathbb{F}_{n} \times  \mathbb{F}_{q}^n$, is partitioned into equivalence classes.
An arbitrary equivalence class in this partition contains at most $n!n!(q!)^n$
different $1$-perfect mixed codes.
Since $n = q^m$, we have
\begin{equation}\label{eq317}
n!n!(q!)^n \leq n^{n + 1}n^{n + 1}q^{(q + 1)n} = n^{2(n + 1)}q^{(q + 1)n} = q^{2m(q^m + 1)}q^{(q + 1)q^m}.
\end{equation}
Validity of Corollary \ref{Cl:2} follows from  Theorems\, \ref{constr1}, \ref{th7} and from equation  (\ref{eq317}).
\end{proof}

%%%%%%%%%%%%%%%%%%%%%%%%%%%%%%%%%%%%%%%%%%%%%%%%%%%

%%%%%%%%%%%%%%%%%%%%%%%%%%%%%%%%%%%%%%%%%%%%%%%%%%%

%%%%%%%%%%%%%%%%%%%%%%%%%%%%%%%%%%%%%%%%%%%%%%%%%%%
%%%%%%%%%%%%%%%%%%%%%%%%%%%%%%%%%%%%%%%%%%%%%%%%%%%

%%%%%%%%%%%%%%%%%%%%%%%%%%%%%%%%%%%%%%%%%%%%%%%%%%%
%%%%%%%%%%%%%%%%%%%%%%%%%%%%%%%%%%%%%%%%%%%%%%%%%%%

\end{document}